
\documentclass{article}%
\usepackage{amsfonts}
\usepackage{amssymb}
\usepackage{amsmath}
\usepackage{graphicx}%
\setcounter{MaxMatrixCols}{30}
\providecommand{\U}[1]{\protect\rule{.1in}{.1in}}

\newtheorem{theorem}{Theorem}

\newtheorem{example}[theorem]{Example}

\newenvironment{proof}[1][Proof]{\noindent\textbf{#1.} }{\ \rule{0.5em}{0.5em}}
\setlength{\textwidth}{6in}
\setlength{\textheight}{8.5in}
\setlength{\parindent}{.25in}
\frenchspacing
\setlength{\oddsidemargin}{0pt}
\setlength{\evensidemargin}{0pt}
\setlength{\topmargin}{0pt}
\begin{document}

\title{New Logical Foundations for Quantum Information Theory: Introduction to
Quantum Logical Information Theory}
\author{David Ellerman\\University of California at Riverside}
\maketitle

\begin{abstract}
\noindent Logical information theory is the quantitative version of the logic
of partitions just as logical probability theory is the quantitative version
of the dual Boolean logic of subsets. The resulting notion of information is
about distinctions, differences, and distinguishability, and is formalized
using the distinctions (`dits') of a partition (a pair of points distinguished
by the partition). All the definitions of simple, joint, conditional, and
mutual entropy of Shannon information theory are derived by a uniform
transformation from the corresponding definitions at the logical level.

The purpose of this paper is to give the direct generalization to quantum
logical information theory that similarly focuses on the pairs of eigenstates
distinguished by an observable, i.e., \textit{qudits of an observable}. The
fundamental theorem for quantum logical entropy and measurement establishes a
direct quantitative connection between the increase in quantum logical entropy
due to a projective measurement and the eigenstates (cohered together in the
pure superposition state being measured) that are distinguished by the
measurement (decohered in the post-measurement mixed state). Both the
classical and quantum versions of logical entropy have simple interpretations
as \textquotedblleft two-draw\textquotedblright\ probabilities for
distinctions. The conclusion is that quantum logical entropy is the simple and
natural notion of information for quantum information theory focusing on the
distinguishing of quantum states.

\end{abstract}
\tableofcontents

\section{Duality of Subsets and Partitions}

The foundations for classical and quantum logical information theory are built
on the logic of partitions--which is dual (in the category-theoretic sense) to
the usual Boolean logic of subsets. F. William Lawvere called a subset or, in
general, a subobject a \textquotedblleft part\textquotedblright\ and then
noted: \textquotedblleft The dual notion (obtained by reversing the arrows) of
`part' is the notion of \textit{partition}.\textquotedblright\ \cite[p.
85]{law:sfm} That suggests that the Boolean logic of subsets has a dual logic
of partitions (\cite{ell:partitions}, \cite{ell:intropartlogic}).

This duality can be most simply illustrated using a set function
$f:X\rightarrow Y$. The image $f\left(  X\right)  $ is a \textit{subset} of
the codomain $Y$ and the inverse-image or coimage\footnote{In category theory,
the duality between subobject-type constructions (e.g., limits) and
quotient-object-type constructions (e.g., colimits) is often indicated by
adding the prefix \textquotedblleft co-\textquotedblright\ to the latter.
Hence the usual Boolean logic of \textquotedblleft images\textquotedblright%
\ has the dual logic of \textquotedblleft coimages.\textquotedblright}
$f^{-1}\left(  Y\right)  $ is a \textit{partition} on the domain $X$--where a
\textit{partition} $\pi=\left\{  B_{1},...,B_{I}\right\}  $ on a set $U$ is a
set of subsets or blocks\ $B_{i}$ that are mutually disjoint and jointly
exhaustive ($\cup_{i}B_{i}=U$). But the duality runs deeper than between
subsets and partitions. The dual to the notion of an \textquotedblleft
element\textquotedblright\ (an `it') of a subset is the notion of a
\textquotedblleft distinction\textquotedblright\ (a `dit') of a partition,
where $\left(  u,u^{\prime}\right)  \in U\times U$ is a \textit{distinction}
or \textit{dit} of $\pi$ if the two elements are in different blocks. Let
$\operatorname*{dit}\left(  \pi\right)  \subseteq U\times U$ be the set of
distinctions or \textit{ditset} of $\pi$. Similarly an \textit{indistinction}
or \textit{indit} of $\pi$ is a pair $\left(  u,u^{\prime}\right)  \in U\times
U$ in the same block of $\pi$. Let $\operatorname*{indit}\left(  \pi\right)
\subseteq U\times U$ be the set of indistinctions or \textit{inditset }of
$\pi$. Then $\operatorname*{indit}\left(  \pi\right)  $ is the equivalence
relation associated with $\pi$ and $\operatorname*{dit}\left(  \pi\right)
=U\times U-\operatorname*{indit}\left(  \pi\right)  $ is the complementary
binary relation that might be called a\textit{ partition relation} or an
\textit{apartness relation}. The notions of a distinction and indistinction of
a partition are illustrated in Figure 1.%

\begin{center}
\includegraphics[
height=1.4209in,
width=2.4794in
]%
{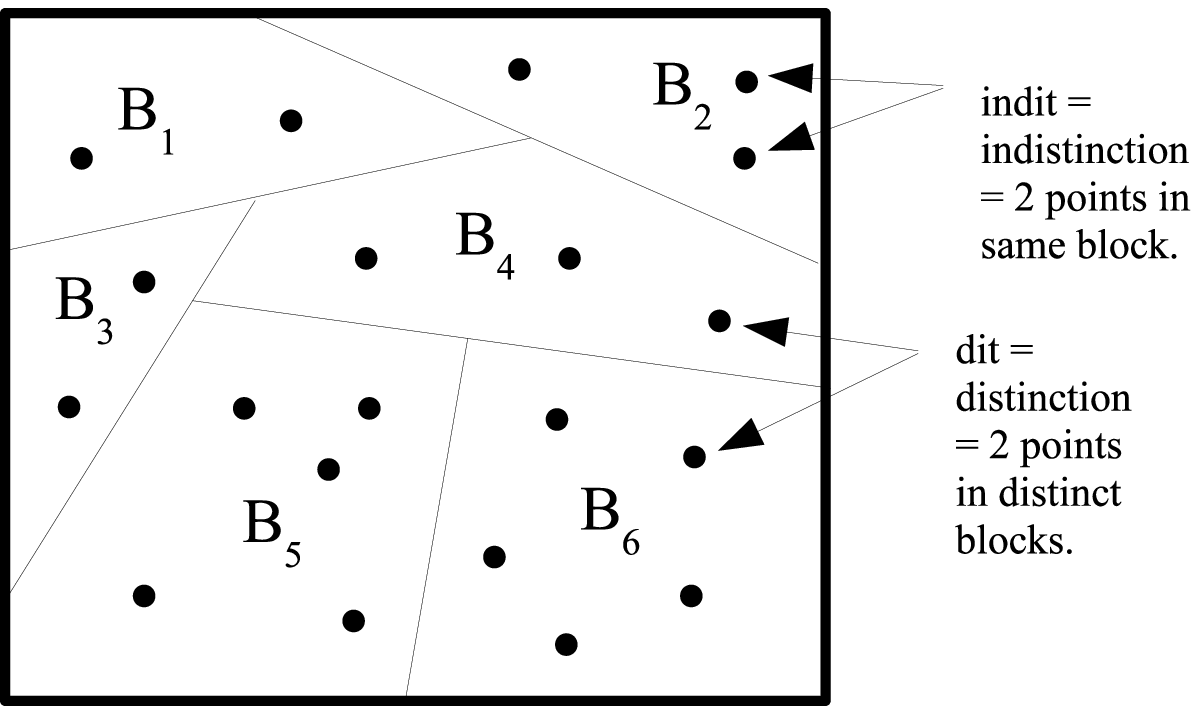}%
\end{center}

\begin{center}
Figure 1: Distinctions and indistinctions of a partition.
\end{center}

The Boolean logic of subsets is usually treated in modern texts solely in
terms of the special case of \textquotedblleft propositional
logic.\textquotedblright\ Given a formula $\Phi\left(  \pi,\sigma,...\right)
$ composed with Boolean operations (e.g., $\vee$, $\wedge$, $\Longrightarrow$,
$\emptyset$) on the atomic variables $\pi$, $\sigma$, ..., then a
\textit{Boolean tautology} should be defined as a formula such that no matter
what subsets of the nonempty universe $U$ are substituted for the atomic
variables, then the whole formula evaluates (using the corresponding set
operations) to the universe $U$, the top of the power-set Boolean algebra
$\wp\left(  U\right)  $. It is then a theorem (not a definition) that the same
set of valid formulas is obtained if one only considers the one-element
universe $U=1$, in which case it is convenient to interpret the variables and
formulas as being propositions. Most modern texts just start with this
propositional special case and \textit{define} a valid formula as a truth
table tautology, i.e., as a formula such that no matter what subsets
$\emptyset,1$ of the universe $1$ are substituted for the atomic variables,
the whole formula will evaluate to the universe $1$. This neglect of the
general Boolean logic of subsets in favor of the propositional special case is
one of the reasons for the long delay in developing the dual logic of
partitions--since propositions, unlike subsets, don't have a
category-theoretic dual \cite{ell:partitions}.

The algebra associated with the subsets $S\subseteq U$ is the power-set
Boolean algebra $\wp\left(  U\right)  $ of subsets of $U$ with the partial
order as the inclusion of elements. The corresponding algebra of partitions
$\pi$ on $U$ is the \textit{partition algebra} $\prod\left(  U\right)  $
defined as follows:

\begin{itemize}
\item the lattice \textit{partial order} $\sigma\preceq\pi$ of partitions
$\sigma=\left\{  C_{1},...,C_{J}\right\}  $ and $\pi=\left\{  B_{1}%
,...,B_{I}\right\}  $ holds when $\pi$ \textit{refines} $\sigma$ in the sense
that for every block $B_{i}\in\pi$ there is a block $C_{j}\in\sigma$ such that
$B_{i}\subseteq C_{j}$, or, equivalently, using the element-distinction (`its'
\& `dits') pairing, the partial order is the inclusion of distinctions:
$\sigma\preceq\pi$ if and only if (iff) $\operatorname*{dit}\left(
\sigma\right)  \subseteq\operatorname*{dit}\left(  \pi\right)  $;

\item the minimum or bottom partition is the \textit{indiscrete partition} (or
blob) $\mathbf{0}=\left\{  U\right\}  $ with one block consisting of all of
$U$;

\item the maximum or top partition is the \textit{discrete partition}
$\mathbf{1}=\left\{  \left\{  u\right\}  \right\}  _{u\in U}$ consisting of
singleton blocks;

\item the \textit{join} $\pi\vee\sigma$ is the partition whose blocks are the
non-empty intersections $B_{i}\cap C_{j}$ of blocks of $\pi$ and blocks of
$\sigma$, or, equivalently, using the element-distinction pairing,
$\operatorname*{dit}\left(  \pi\vee\sigma\right)  =\operatorname*{dit}\left(
\pi\right)  \cup\operatorname*{dit}\left(  \sigma\right)  $;

\item the \textit{meet} $\pi\wedge\sigma$ is the partition whose blocks are
the equivalence classes for the equivalence relation generated by: $u\sim
u^{\prime}$ if $u\in B_{i}\in\pi$, $u^{\prime}\in C_{j}\in\sigma$, and
$B_{i}\cap C_{j}\neq\emptyset$; and

\item $\sigma\Rightarrow\pi$ is the \textit{implication partition} whose
blocks are: (1) the singletons $\left\{  u\right\}  $ for $u\in B_{i}\in\pi$
if there is a $C_{j}\in\sigma$ such that $B_{i}\subseteq C_{j}$, or (2) just
$B_{i}\in\pi$ if there is no $C_{j}\in\sigma$ with $B_{i}\subseteq C_{j}$, so
that trivially: $\sigma\Rightarrow\pi=\mathbf{1}$ iff $\sigma\preceq\pi$.
\end{itemize}

The same formulas $\Phi\left(  \pi,\sigma,...\right)  $ can be interpreted as
subset formulas or partition formulas. A \textit{partition tautology} is
analogously defined as a formula such that no matter what partitions on any
$U$ ($\left\vert U\right\vert \geq2$) are substituted for the variables, the
whole formula will evaluate by the partition operations to the discrete
partition $\mathbf{1}$, the top of the partition algebra $\prod\left(
U\right)  $. For instance, \textit{modus ponens}, $\left(  \sigma\wedge\left(
\sigma\Rightarrow\pi\right)  \right)  \Rightarrow\pi$, is a partition
tautology. There is no $U$, analogous to $U=1$, such that a formula is a
partition tautology if and only if it always evaluates to $\mathbf{1}$ for
partitions on $U$ \cite[Proposition 1.18]{ell:partitions}.

There is a better way to connect subsets and partitions to propositions by
considering a generic element $u$ and a generic distinction $\left(
u,u^{\prime}\right)  $ (with $u\neq u^{\prime}$ understood). If a formula
$\Phi\left(  \pi,\sigma,...\right)  $ is construed as a subset formula, then
\textquotedblleft$u$ is an element of $\Phi\left(  \pi,\sigma,...\right)
$\textquotedblright\ [i.e., $u\in\Phi\left(  \pi,\sigma,...\right)  $] is the
corresponding proposition that is always true when $\Phi\left(  \pi
,\sigma,...\right)  $ is a Boolean tautology. If the formula $\Phi\left(
\pi,\sigma,...\right)  $ is construed as a partition formula, then the
corresponding proposition \textquotedblleft$\left(  u,u^{\prime}\right)  $ is
a distinction of $\Phi\left(  \pi,\sigma,...\right)  $\textquotedblright\ is
always true if $\Phi\left(  \pi,\sigma,...\right)  $ is a partition
tautology.\footnote{See \cite{ell:partitions} for how to use this
propositional connection to develop a consistent and complete system of
semantic tableaus for partition tautologies. For a given universe set $U$,
there is one logic of partitions on $U$ just as there is one Boolean logic of
subsets of $U$, and it does not seem to be directly related to Karl Svozil's
\textquotedblleft partition logics\textquotedblright\ \cite{svozil:qlogic}
which are certain sets of partitions that arise in the study of
quasi-classical models, such as Ron Wright's \textquotedblleft generalized urn
models\textquotedblright\ \cite{wright:genurnmodels}, for the quantum logic of
subspaces.} These results are summarized in Table 1 which illustrates the dual
relationship between the elements (`its') of a subset and the distinctions
(`dits') of a partition.

\begin{center}%
\begin{center}
\includegraphics[
height=2.6751in,
width=4.8158in
]%
{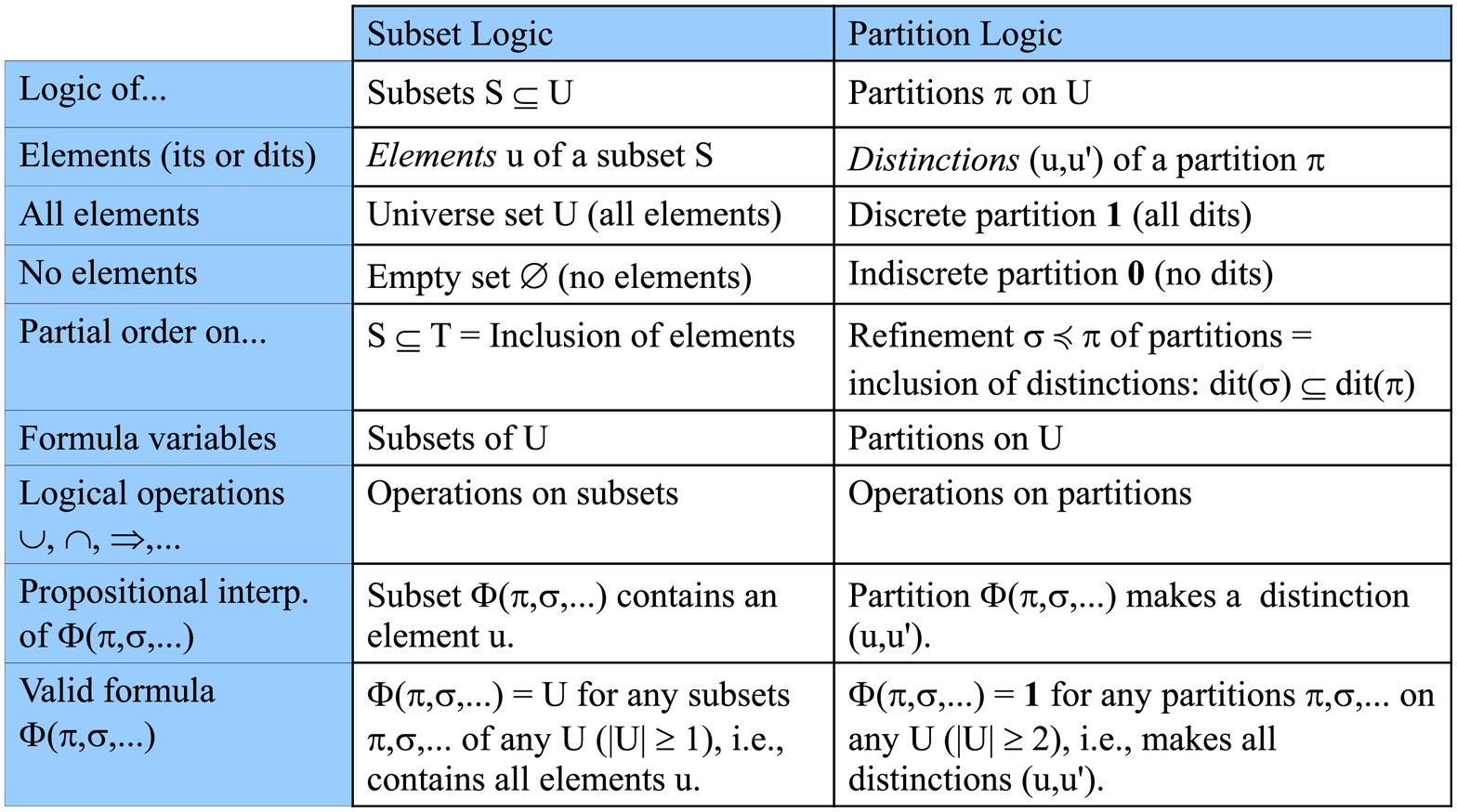}%
\end{center}

Table 1: Dual Logics: Boolean subset logic of subsets and partition logic.
\end{center}

\section{From the logic of partitions to logical information theory}

George Boole \cite{boole:lot} developed the quantitative version of his logic
of subsets by starting with the size or number of elements $\left\vert
S\right\vert $ in a subset $S\subseteq U$, which could then be normalized to
$\frac{\left\vert S\right\vert }{\left\vert U\right\vert }$ and given the
probabilistic interpretation as the probability that a randomly drawn element
from $U$ would be an element of $S$. The algebra of partitions $\pi$ on $U$ is
isomorphically represented by the algebra of ditsets $\operatorname*{dit}%
\left(  \pi\right)  \subseteq U\times U$, so the parallel quantitative
development of the logic of partitions would start with the size or number of
distinctions $\left\vert \operatorname*{dit}\left(  \pi\right)  \right\vert $
in a partition $\pi$ on $U$, which could then be normalized to $\frac
{\left\vert \operatorname*{dit}\left(  \pi\right)  \right\vert }{\left\vert
U\times U\right\vert }$. It has the probabilistic interpretation as the
probability that two randomly drawn elements from $U$ (with replacement) would
be a distinction of $\pi$.

In Gian-Carlo Rota's Fubini Lectures \cite{rota:fubini} (and in his lectures
at MIT), he remarked in view of duality between partitions and subsets that,
quantitatively, the \textquotedblleft lattice of partitions plays for
information the role that the Boolean algebra of subsets plays for size or
probability\textquotedblright\ \cite[p. 30]{kung-rota-yan:comb} or symbolically:

\begin{center}
$\frac{\text{probability}}{\text{subsets}}=\frac{\text{information}%
}{\text{partitions.}}$
\end{center}

\noindent Since \textquotedblleft Probability is a measure on the Boolean
algebra of events\textquotedblright\ that gives quantitatively the
\textquotedblleft intuitive idea of the size of a set\textquotedblright, we
may ask by \textquotedblleft analogy\textquotedblright\ for some measure to
capture a property for a partition like \textquotedblleft what size is to a
set.\textquotedblright\ Rota goes on to ask:

\begin{quotation}
\noindent How shall we be led to such a property? We have already an inkling
of what it should be: it should be a measure of information provided by a
random variable. Is there a candidate for the measure of the amount of
information? \cite[p. 67]{rota:fubini}
\end{quotation}

\noindent We have just seen that the parallel development suggests the
normalized number of distinctions of a partition as \textquotedblleft the
measure of the amount of information\textquotedblright.

\section{The logical theory of information}

Andrei Kolmogorov has suggested that information theory should start with
sets, not probabilities.

\begin{quotation}
\noindent Information theory must precede probability theory, and not be based
on it. By the very essence of this discipline, the foundations of information
theory have a finite combinatorial character. \cite[p. 39]{kolmogor:combfound}
\end{quotation}

The notion of information-as-distinctions does start with the \textit{set of
distinctions}, the \textit{information set}, of a partition $\pi=\left\{
B_{1},...,B_{I}\right\}  $ on a finite set $U$ where that set of distinctions
(dits) is:

\begin{center}
$\operatorname*{dit}\left(  \pi\right)  =\left\{  \left(  u,u^{\prime}\right)
:\exists B_{i},B_{i^{\prime}}\in\pi,B_{i}\neq B_{i^{\prime}},u\in
B_{i},u^{\prime}\in B_{i^{\prime}}\right\}  $.
\end{center}

The normalized size of a subset is the logical probability of the event, and
the normalized size of the ditset of a partition is, in the sense of measure
theory, \textquotedblleft the measure of the amount of
information\textquotedblright\ in a partition. Thus we define the
\textit{logical entropy} of a partition $\pi=\left\{  B_{1,}...,B_{I}\right\}
$, denoted $h\left(  \pi\right)  $, as the size of the ditset
$\operatorname*{dit}\left(  \pi\right)  \subseteq U\times U$ normalized by the
size of $U\times U$:

\begin{center}
$h\left(  \pi\right)  =\frac{\left\vert \operatorname*{dit}\left(  \pi\right)
\right\vert }{\left\vert U\times U\right\vert }=\sum_{\left(  u_{j}%
,u_{k}\right)  \in\operatorname*{dit}\left(  \pi\right)  }\frac{1}{\left\vert
U\right\vert }\frac{1}{\left\vert U\right\vert }$

Logical entropy of $\pi$ (equiprobable case).
\end{center}

\noindent This is just the product probability measure on $U\times U$ of the
equiprobable or uniform probability distribution on $U$ applied to the
information set or ditset $\operatorname*{dit}\left(  \pi\right)  $. The
inditset of $\pi$ is $\operatorname*{indit}\left(  \pi\right)  =\cup_{i=1}%
^{I}\left(  B_{i}\times B_{i}\right)  $ so where $\Pr\left(  B_{i}\right)
=\frac{|B_{i}|}{\left\vert U\right\vert }$ in the equiprobable case, we have:

\begin{center}
$h\left(  \pi\right)  =\frac{\left\vert \operatorname*{dit}\left(  \pi\right)
\right\vert }{\left\vert U\times U\right\vert }=\frac{\left\vert U\times
U\right\vert -\sum_{i=1}^{I}\left\vert B_{i}\times B_{i}\right\vert
}{\left\vert U\times U\right\vert }=1-\sum_{i=1}^{I}\left(  \frac{\left\vert
B_{i}\right\vert }{\left\vert U\right\vert }\right)  ^{2}=1-\sum_{i=1}^{I}%
\Pr\left(  B_{i}\right)  ^{2}$.
\end{center}

\noindent In two independent draws from $U$, the probability of getting a
distinction of $\pi$ is the probability of not getting an indistinction.

Given any probability measure $p:U\rightarrow\lbrack0,1]$ on $U=\left\{
u_{1},...,u_{n}\right\}  $ which defines $p_{i}=p\left(  u_{i}\right)  $ for
$i=1,...,n$, the\textit{\ product measure} $p\times p:U\times U\rightarrow
\left[  0,1\right]  $ has for any binary relation $R\subseteq U\times U$ the
value of:

\begin{center}
$p\times p\left(  R\right)  =\sum_{\left(  u_{i},u_{j}\right)  \in R}p\left(
u_{i}\right)  p\left(  u_{j}\right)  =\sum_{\left(  u_{i},u_{j}\right)  \in
R}p_{i}p_{j}$.
\end{center}

The \textit{logical entropy} of $\pi$ in general is the product-probability
measure of its ditset $\operatorname*{dit}\left(  \pi\right)  \subseteq
U\times U$, where $\Pr\left(  B\right)  =\sum_{u\in B}p\left(  u\right)  $:

\begin{center}
$h\left(  \pi\right)  =p\times p\left(  \operatorname*{dit}\left(  \pi\right)
\right)  =\sum_{\left(  u_{i},u_{j}\right)  \in\operatorname*{dit}\left(
\pi\right)  }p_{i}p_{j}=1-\sum_{B\in\pi}\Pr\left(  B\right)  ^{2}$.
\end{center}

There are two stages in the development of logical information. Before the
introduction of any probabilities, the information set of a partition $\pi$ on
$U$ is its ditset $\operatorname*{dit}\left(  \pi\right)  $. Then given a
probability measure $p:U\rightarrow\left[  0,1\right]  $ on $U$, the logical
entropy of the partition is just the product measure on the ditset, i.e.,
$h\left(  \pi\right)  =p\times p\left(  \operatorname*{dit}\left(  \pi\right)
\right)  $. The standard interpretation of $h\left(  \pi\right)  $ is the
two-draw probability of getting a distinction of the partition $\pi$--just as
$\Pr\left(  S\right)  $ is the one-draw probability of getting an element of
the subset-event $S$.

\section{Compound logical entropies}

The compound notions of logical entropy are also developed in two stages,
first as sets and then, given a probability distribution, as two-draw
probabilities. After observing the similarity between the formulas holding for
the compound Shannon entropies and the Venn diagram formulas that hold for any
measure (in the sense of measure theory), the information theorist, Lorne L.
Campbell, remarked in 1965 that the similarity:

\begin{quotation}
\noindent suggests the possibility that $H\left(  \alpha\right)  $ and
$H\left(  \beta\right)  $ are measures of sets, that $H\left(  \alpha
,\beta\right)  $ is the measure of their union, that $I\left(  \alpha
,\beta\right)  $ is the measure of their intersection, and that $H\left(
\alpha|\beta\right)  $ is the measure of their difference. The possibility
that $I\left(  \alpha,\beta\right)  $ is the entropy of the \textquotedblleft
intersection\textquotedblright\ of two partitions is particularly interesting.
This \textquotedblleft intersection,\textquotedblright\ if it existed, would
presumably contain the information common to the partitions $\alpha$ and
$\beta$. \cite[p. 113]{camp:meas}
\end{quotation}

\noindent Yet, there is no such interpretation of the Shannon entropies as
measures of sets, but the logical entropies precisely fulfill Campbell's
suggestion (with the \textquotedblleft intersection\textquotedblright\ of two
partitions being the intersection of their ditsets). Moreover, there is a
uniform requantifying transformation (see next section) that obtains all the
Shannon definitions from the logical definitions and explains how the Shannon
entropies can satisfy the Venn diagram formulas (e.g., as a mnemonic) while
not being defined by a measure on sets.

Given partitions $\pi=\left\{  B_{1},...,B_{I}\right\}  $ and $\sigma=\left\{
C_{1},...,C_{J}\right\}  $ on $U$, the \textit{joint information set} is the
union of the ditsets which is also the \textit{ditset for their join} is:
$\operatorname*{dit}\left(  \pi\right)  \cup\operatorname*{dit}\left(
\sigma\right)  =\operatorname*{dit}\left(  \pi\vee\sigma\right)  \subseteq
U\times U$. Given probabilities $p=\left\{  p_{1},...,p_{n}\right\}  $ on $U$,
the \textit{joint logical entropy} is the product probability measure on the
union of ditsets:

\begin{center}
$h\left(  \pi,\sigma\right)  =h\left(  \pi\vee\sigma\right)  =p\times p\left(
\operatorname*{dit}\left(  \pi\right)  \cup\operatorname*{dit}\left(
\sigma\right)  \right)  =1-\sum_{i,j}\Pr\left(  B_{i}\cap C_{j}\right)  ^{2}$.
\end{center}

\noindent The information set for the \textit{conditional logical entropy}
$h\left(  \pi|\sigma\right)  $ is the difference of ditsets, and thus that
logical entropy is:

\begin{center}
$h\left(  \pi|\sigma\right)  =p\times p\left(  \operatorname*{dit}\left(
\pi\right)  -\operatorname*{dit}\left(  \sigma\right)  \right)  =h\left(
\pi,\sigma\right)  -h\left(  \sigma\right)  $.
\end{center}

\noindent The information set for the \textit{logical mutual information}
$m\left(  \pi,\sigma\right)  \ $is the intersection of ditsets, so that
logical entropy is:

\begin{center}
$m\left(  \pi,\sigma\right)  =p\times p\left(  \operatorname*{dit}\left(
\pi\right)  \cap\operatorname*{dit}\left(  \sigma\right)  \right)  =h\left(
\pi,\sigma\right)  -h\left(  \pi|\sigma\right)  -h\left(  \sigma|\pi\right)
=h\left(  \pi\right)  +h\left(  \sigma\right)  -h\left(  \pi,\sigma\right)  $.
\end{center}

Since all the logical entropies are the values of a measure $p\times p:U\times
U\rightarrow\left[  0,1\right]  $ on subsets of $U\times U$, they
automatically satisfy the usual Venn diagram relationships.

\begin{center}%
\begin{center}
\includegraphics[
height=1.5731in,
width=1.7798in
]%
{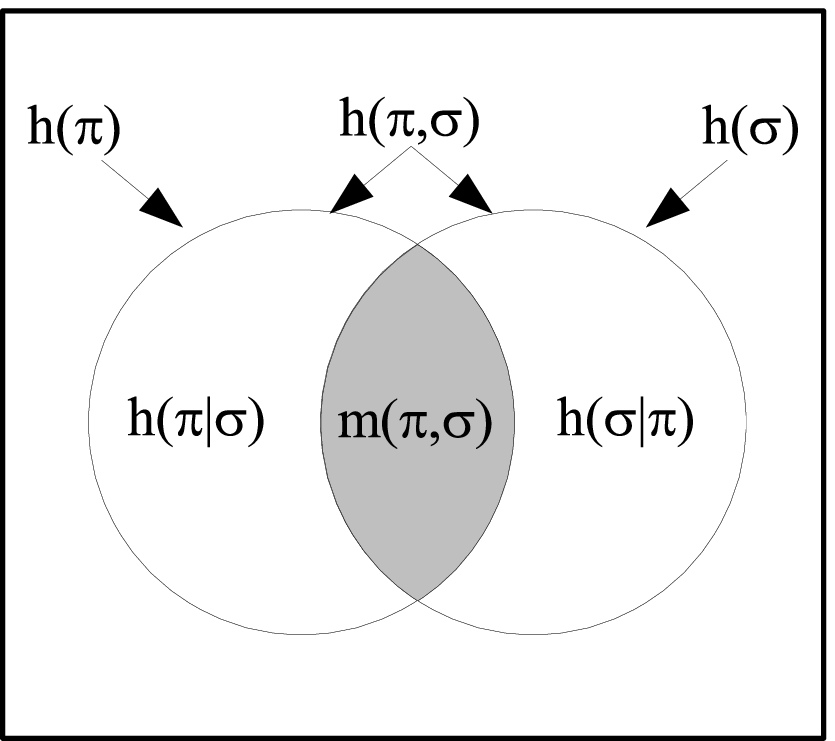}%
\end{center}

Figure 2: Venn diagram for logical entropies

as values of a probability measure $p\times p$ on $U\times U$
\end{center}

At the level of information sets (w/o probabilities), we have the
\textit{information algebra} $I\left(  \pi,\sigma\right)  $ which is the
Boolean subalgebra of $\wp\left(  U\times U\right)  $ generated by ditsets and
their complements.

\section{Deriving the Shannon entropies from the logical entropies}

Instead of being defined as the values of a measure, the usual notions of
simple and compound entropy `burst forth fully formed from the forehead' of
Claude Shannon \cite{shannon:comm} already satisfying the standard Venn
diagram relationships.\footnote{One author surmised that \textquotedblleft
Shannon carefully contrived for this `accident' to occur\textquotedblright%
\ \cite[p. 153]{rozeboom:partials}.} Since the Shannon entropies are not the
values of a measure, many authors have pointed out that these Venn diagram
relations for the Shannon entropies can only be taken as \textquotedblleft
analogies\textquotedblright\ or \textquotedblleft mnemonics\textquotedblright%
\ (\cite{camp:meas}; \cite{abramson:it}). Logical information theory explains
this situation since all the Shannon definitions of simple, joint,
conditional, and mutual information can be obtained by a uniform requantifying
transformation from the corresponding logical definitions, and the
transformation preserves the Venn diagram relationships.

This transformation is possible since the logical and Shannon notions of
entropy can be seen as two different ways to quantify distinctions--and thus
\textit{both} theories are based on the foundational idea of
\textit{information-as-distinctions}.

Consider the canonical case of $n$ equiprobable elements, $p_{i}=\frac{1}{n}$.
The logical entropy of $\mathbf{1}=\left\{  B_{1},...,B_{n}\right\}  $ where
$B_{i}=\left\{  u_{i}\right\}  $ with $p=\left\{  \frac{1}{n},...,\frac{1}%
{n}\right\}  $ is:

\begin{center}
$\frac{\left\vert U\times U-\Delta\right\vert }{\left\vert U\times
U\right\vert }=\frac{n^{2}-n}{n^{2}}=1-\frac{1}{n}=1-\Pr\left(  B_{i}\right)
$.
\end{center}

\noindent The normalized number of distinctions or `dit-count' of the discrete
partition $\mathbf{1}$ is $1-\frac{1}{n}=1-\Pr\left(  B_{i}\right)  $. The
general case of logical entropy for any $\pi=\left\{  B_{1},...,B_{I}\right\}
$ is the average of the dit-counts $1-\Pr\left(  B_{i}\right)  $ for the
canonical cases:

\begin{center}
$h\left(  \pi\right)  =\sum_{i}\Pr\left(  B_{i}\right)  \left(  1-\Pr\left(
B_{i}\right)  \right)  $.
\end{center}

In the canonical case of $2^{n}$ equiprobable elements, the minimum number of
binary partitions (\textquotedblleft yes-or-no questions\textquotedblright\ or
\textquotedblleft bits\textquotedblright)\ whose join is the discrete
partition $\mathbf{1}=\left\{  B_{1},...,B_{2^{n}}\right\}  $ with $\Pr\left(
B_{i}\right)  =\frac{1}{2^{n}}$, i.e., that it takes to uniquely
\textit{encode} each distinct element, is $n$, so the Shannon-Hartley entropy
\cite{hart:ti} is the canonical bit-count:

\begin{center}
$n=\log_{2}\left(  2^{n}\right)  =\log_{2}\left(  \frac{1}{1/2^{n}}\right)
=\log_{2}\left(  \frac{1}{\Pr\left(  B_{i}\right)  }\right)  $.
\end{center}

\noindent The general case Shannon entropy is the average of these canonical
bit-counts $\log_{2}\left(  \frac{1}{\Pr\left(  B_{i}\right)  }\right)  $:

\begin{center}
$H\left(  \pi\right)  =\sum_{i}\Pr\left(  B_{i}\right)  \log_{2}\left(
\frac{1}{\Pr\left(  B_{i}\right)  }\right)  $.
\end{center}

The \textit{Dit-Bit Transform} essentially replaces the canonical dit-counts
by the canonical bit-counts. First express any logical entropy concept
(simple, joint, conditional, or mutual) as an average of canonical dit-counts
$1-\Pr\left(  B_{i}\right)  $, and then substitute the canonical bit-count
$\log\left(  \frac{1}{\Pr\left(  B_{i}\right)  }\right)  $ to obtain the
corresponding formula as defined by Shannon. Table 2 gives examples of the
dit-bit transform.

\begin{center}%
\begin{center}
\includegraphics[
height=1.8911in,
width=4.2071in
]%
{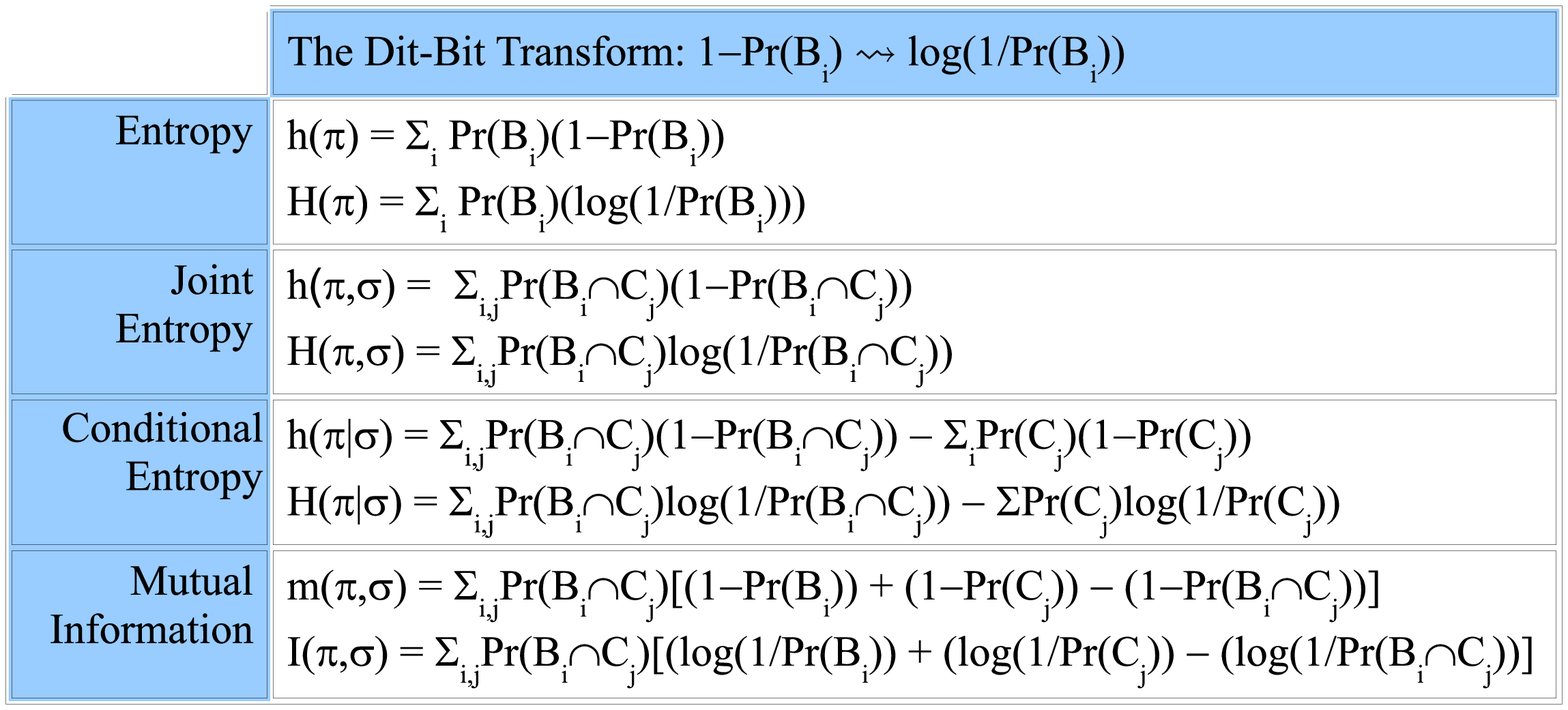}%
\end{center}

Table 2: Summary of the dit-bit transform
\end{center}

For instance,

\begin{center}
$h\left(  \pi|\sigma\right)  =h\left(  \pi,\sigma\right)  -h\left(
\sigma\right)  =\sum_{i,j}\Pr\left(  B_{i}\cap C_{j}\right)  \left[
1-\Pr\left(  B_{i}\cap C_{j}\right)  \right]  -\sum_{j}\Pr\left(
C_{j}\right)  \left[  1-\Pr\left(  C_{j}\right)  \right]  $
\end{center}

\noindent is the expression for $h\left(  \pi|\sigma\right)  $ as an average
over $1-\Pr\left(  B_{i}\cap C_{j}\right)  $ and $1-\Pr\left(  C_{j}\right)
$, so applying the dit-bit transform gives:

\begin{center}
$\sum_{i,j}\Pr\left(  B_{i}\cap C_{j}\right)  \log\left(  1/\Pr\left(
B_{i}\cap C_{j}\right)  \right)  -\sum_{j}\Pr\left(  C_{j}\right)  \log\left(
1/\Pr\left(  C_{j}\right)  \right)  =H\left(  \pi,\sigma\right)  -H\left(
\sigma\right)  =H\left(  \pi|\sigma\right)  $.
\end{center}

The dit-bit transform is linear in the sense of preserving plus and minus, so
the Venn diagram formulas, e.g., $h\left(  \pi,\sigma\right)  =h\left(
\sigma\right)  +h\left(  \pi|\sigma\right)  $, automatically satisfied by
logical entropy since it is a measure, carry over to Shannon entropy, e.g.,
$H\left(  \pi,\sigma\right)  =H\left(  \sigma\right)  +H\left(  \pi
|\sigma\right)  $, in spite of it not being a measure (in the sense of measure theory):

\begin{center}
$%
{\includegraphics[
height=1.5394in,
width=1.7659in
]%
{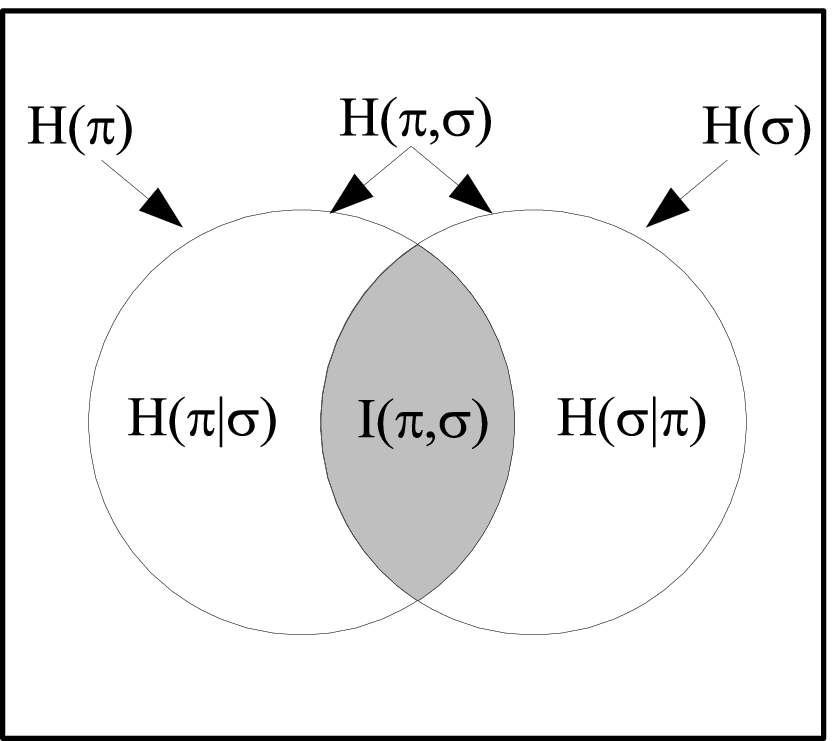}%
}%
$

Figure 3:\ Venn diagram mnemonic for Shannon entropies
\end{center}

\section{Logical entropy via density matrices}

The transition to quantum logical entropy is facilitated by reformulating the
logical theory in terms of density matrices. Let $U=\left\{  u_{1}%
,...,u_{n}\right\}  $ be the sample space with the point probabilities
$p=\left(  p_{1},...,p_{n}\right)  $. An event $S\subseteq U$ has the
probability $\Pr\left(  S\right)  =\sum_{u_{j}\in S}p_{j}$.

For any event $S$ with $\Pr\left(  S\right)  >0$, let

\begin{center}
$\left\vert S\right\rangle =\frac{1}{\sqrt{\Pr\left(  S\right)  }}(\chi
_{S}\left(  u_{1}\right)  \sqrt{p_{1}},...,\chi_{S}\left(  u_{n}\right)
\sqrt{p_{n}})^{t}$
\end{center}

\noindent(the superscript $t$ indicates transpose) which is a normalized
column vector in $%
\mathbb{R}
^{n}$ where $\chi_{S}:U\rightarrow\left\{  0,1\right\}  $ is the
characteristic function for $S$, and let $\left\langle S\right\vert $ be the
corresponding row vector. Since $\left\vert S\right\rangle $ is normalized,
$\left\langle S|S\right\rangle =1$. Then the \textit{density matrix}
representing the event $S$ is the $n\times n$ symmetric real matrix:

\begin{center}
$\rho\left(  S\right)  =\left\vert S\right\rangle \left\langle S\right\vert
=\left\{
\begin{array}
[c]{c}%
\frac{1}{\Pr\left(  S\right)  }\sqrt{p_{j}p_{k}}\text{ for }u_{j},u_{k}\in S\\
0\text{ otherwise}%
\end{array}
\right.  $.
\end{center}

\noindent Then $\rho\left(  S\right)  ^{2}=\left\vert S\right\rangle
\left\langle S|S\right\rangle \left\langle S\right\vert =\rho\left(  S\right)
$ so borrowing language from quantum mechanics, $\rho\left(  S\right)  $ is
said to be a \textit{pure state} density matrix.

Given any partition $\pi=\left\{  B_{1},...,B_{I}\right\}  $ on $U$, its
density matrix is the average of the block density matrices:

\begin{center}
$\rho\left(  \pi\right)  =\sum_{i}\Pr\left(  B_{i}\right)  \rho\left(
B_{i}\right)  $.
\end{center}

Then $\rho\left(  \pi\right)  $ represents the \textit{mixed state},
experiment, or lottery where the event $B_{i}$ occurs with probability
$\Pr\left(  B_{i}\right)  $. A little calculation connects the logical entropy
$h\left(  \pi\right)  $ of a partition with the density matrix treatment:

\begin{center}
$h\left(  \pi\right)  =1-\sum_{i=1}^{I}\Pr\left(  B_{i}\right)  ^{2}%
=1-\operatorname*{tr}\left[  \rho\left(  \pi\right)  ^{2}\right]  =h\left(
\rho\left(  \pi\right)  \right)  $
\end{center}

\noindent where $\rho\left(  \pi\right)  ^{2}$ is substituted for $\Pr\left(
B_{i}\right)  ^{2}$ and the trace is substituted for the summation.

\begin{example}
For the throw of a fair die, $U=\left\{  u_{1},u_{3},u_{5},u_{2},u_{4}%
,u_{6}\right\}  $ (note the odd faces ordered before the even faces in the
matrix rows and columns) where $u_{j}$ represents the number $j$ coming up,
the density matrix $\rho\left(  \mathbf{0}\right)  $ is the \textquotedblleft
pure state\textquotedblright\ $6\times6$ matrix with each entry being
$\frac{1}{6}$.
\end{example}

\begin{center}
$\rho\left(  \mathbf{0}\right)  =%
\begin{bmatrix}
1/6 & 1/6 & 1/6 & 1/6 & 1/6 & 1/6\\
1/6 & 1/6 & 1/6 & 1/6 & 1/6 & 1/6\\
1/6 & 1/6 & 1/6 & 1/6 & 1/6 & 1/6\\
1/6 & 1/6 & 1/6 & 1/6 & 1/6 & 1/6\\
1/6 & 1/6 & 1/6 & 1/6 & 1/6 & 1/6\\
1/6 & 1/6 & 1/6 & 1/6 & 1/6 & 1/6
\end{bmatrix}%
\begin{array}
[c]{c}%
u_{1}\\
u_{3}\\
u_{5}\\
u_{2}\\
u_{4}\\
u_{6}%
\end{array}
$.
\end{center}

The nonzero off-diagonal entries represent indistinctions or indits of the
partition $\mathbf{0}$, or in quantum terms, \textquotedblleft
coherences,\textquotedblright\ where all $6$ \textquotedblleft
eigenstates\textquotedblright\ cohere together in a pure \textquotedblleft
superposition\textquotedblright\ state. All pure states have logical entropy
of zero, i.e., $h\left(  \mathbf{0}\right)  =0$ (i.e., no dits) since
$\operatorname*{tr}\left[  \rho\right]  =1$ for any density matrix so if
$\rho\left(  \mathbf{0}\right)  ^{2}=\rho\left(  \mathbf{0}\right)  $, then
$\operatorname*{tr}\left[  \rho\left(  \mathbf{0}\right)  ^{2}\right]
=\operatorname*{tr}\left[  \rho\left(  \mathbf{0}\right)  \right]  =1$ and
$h\left(  0\right)  =1-\operatorname*{tr}\left[  \rho\left(  \mathbf{0}%
\right)  ^{2}\right]  =0$. The logical operation of classifying
undistinguished entities (like the six faces of the die before a throw to
determine a face up) by a numerical attribute makes distinctions between the
entities with different numerical values of the attribute. It is the classical
operation corresponding to the quantum operation of `measurement' of a
superposition state by an observable.

\begin{example}
[continued]Now classify or \textquotedblleft measure\textquotedblright\ the
elements by the parity-of-the-face-up (odd or even) partition (observable)
$\pi=\left\{  B_{odd},B_{even}\right\}  =\left\{  \left\{  u_{1},u_{3}%
,u_{5}\right\}  ,\left\{  u_{2},u_{4},u_{6}\right\}  \right\}  $.
Mathematically, this is done by the L\"{u}ders mixture operation \cite[p.
279]{auletta:qm}, i.e., pre- and post-multiplying the density matrix
$\rho\left(  \mathbf{0}\right)  $ by $P_{odd}$ and by $P_{even}$, the
projections to the odd or even components, and summing the results:
\end{example}

\begin{center}
$P_{odd}\rho\left(  \mathbf{0}\right)  P_{odd}+P_{even}\rho\left(
\mathbf{0}\right)  P_{even}$

$=%
\begin{bmatrix}
1/6 & 1/6 & 1/6 & 0 & 0 & 0\\
1/6 & 1/6 & 1/6 & 0 & 0 & 0\\
1/6 & 1/6 & 1/6 & 0 & 0 & 0\\
0 & 0 & 0 & 0 & 0 & 0\\
0 & 0 & 0 & 0 & 0 & 0\\
0 & 0 & 0 & 0 & 0 & 0
\end{bmatrix}
+%
\begin{bmatrix}
0 & 0 & 0 & 0 & 0 & 0\\
0 & 0 & 0 & 0 & 0 & 0\\
0 & 0 & 0 & 0 & 0 & 0\\
0 & 0 & 0 & 1/6 & 1/6 & 1/6\\
0 & 0 & 0 & 1/6 & 1/6 & 1/6\\
0 & 0 & 0 & 1/6 & 1/6 & 1/6
\end{bmatrix}
$

$=%
\begin{bmatrix}
1/6 & 1/6 & 1/6 & 0 & 0 & 0\\
1/6 & 1/6 & 1/6 & 0 & 0 & 0\\
1/6 & 1/6 & 1/6 & 0 & 0 & 0\\
0 & 0 & 0 & 1/6 & 1/6 & 1/6\\
0 & 0 & 0 & 1/6 & 1/6 & 1/6\\
0 & 0 & 0 & 1/6 & 1/6 & 1/6
\end{bmatrix}
$

$=\frac{1}{2}\rho\left(  B_{odd}\right)  +\frac{1}{2}\rho\left(
B_{even}\right)  =\rho\left(  \pi\right)  $.
\end{center}

\begin{theorem}
[Fundamental]The increase in logical entropy, $h\left(  \rho\left(
\pi\right)  \right)  -h\left(  \rho\left(  \mathbf{0}\right)  \right)  $, due
to a L\"{u}ders mixture operation is the sum of amplitudes squared of the
non-zero off-diagonal entries of the beginning density matrix that are zeroed
in the final density matrix.
\end{theorem}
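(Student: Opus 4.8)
The plan is to reduce everything to the single algebraic identity that, for any real symmetric (or Hermitian) density matrix $\rho$, the purity $\operatorname{tr}[\rho^2]$ equals the sum of the squared moduli of all of its entries: $\operatorname{tr}[\rho^2]=\sum_{j,k}\rho_{jk}\rho_{kj}=\sum_{j,k}|\rho_{jk}|^2$. Granting this, since $h(\rho)=1-\operatorname{tr}[\rho^2]$, the increase in logical entropy is $h(\rho(\pi))-h(\rho(\mathbf{0}))=\operatorname{tr}[\rho(\mathbf{0})^2]-\operatorname{tr}[\rho(\pi)^2]=\sum_{j,k}|\rho(\mathbf{0})_{jk}|^2-\sum_{j,k}|\rho(\pi)_{jk}|^2$, so the whole statement becomes an entry-by-entry comparison of the two density matrices.

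The key structural step is to describe exactly what the L\"{u}ders operation $\rho\mapsto\sum_i P_{B_i}\,\rho\,P_{B_i}$ does at the level of matrix entries. Each block projection $P_{B_i}$ is the diagonal $0/1$ matrix with $(P_{B_i})_{jj}=1$ precisely when $u_j\in B_i$, so for any $\rho$ one computes $(P_{B_i}\rho P_{B_i})_{jk}=(P_{B_i})_{jj}\rho_{jk}(P_{B_i})_{kk}$, which equals $\rho_{jk}$ when both $u_j$ and $u_k$ lie in $B_i$ and is zero otherwise. Summing over the mutually disjoint blocks, I would conclude that $\rho(\pi)_{jk}=\rho(\mathbf{0})_{jk}$ whenever $u_j$ and $u_k$ belong to a common block of $\pi$ (this includes every diagonal entry), and $\rho(\pi)_{jk}=0$ whenever $u_j$ and $u_k$ lie in distinct blocks. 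In other words, the L\"{u}ders operation preserves the within-block entries and annihilates exactly the between-block (necessarily off-diagonal) entries --- the ones highlighted as ``coherences'' in the worked die example.

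Finally I would subtract the two sums of squared moduli. Every within-block position (diagonal or off-diagonal) contributes the same term $|\rho(\mathbf{0})_{jk}|^2=|\rho(\pi)_{jk}|^2$ to both sums and therefore cancels, while every between-block position contributes $|\rho(\mathbf{0})_{jk}|^2$ to the first sum and $0$ to the second. Hence $h(\rho(\pi))-h(\rho(\mathbf{0}))=\sum_{u_j,u_k\text{ in distinct blocks}}|\rho(\mathbf{0})_{jk}|^2$, which is precisely the sum of the squared amplitudes of the non-zero off-diagonal entries of $\rho(\mathbf{0})$ that are zeroed in $\rho(\pi)$ (in the general case $\rho(\mathbf{0})_{jk}=\sqrt{p_j p_k}$, so every between-block entry is in fact non-zero, and ``non-zero and zeroed'' coincides with ``between-block''). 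This establishes the claimed equality.

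There is no deep obstacle here: the result is essentially a bookkeeping identity once the two ingredients are in place. The only points demanding care are the trace identity $\operatorname{tr}[\rho^2]=\sum_{j,k}|\rho_{jk}|^2$ --- which is exactly what makes logical entropy a quantitative measure of off-diagonal coherence --- and the verification that $\sum_i P_{B_i}\rho P_{B_i}$ acts block-wise on entries rather than mixing them; the disjointness of the blocks is precisely what guarantees that at most one projector is active on any given pair $(u_j,u_k)$, so that no cross terms survive.
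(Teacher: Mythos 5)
Your proposal is correct and follows essentially the same route as the paper's own proof: both rest on the identity $\operatorname{tr}[\rho^2]=\sum_{j,k}|\rho_{jk}|^2$, reduce the entropy increase to the entrywise difference of squared moduli, and then observe that the L\"{u}ders operation zeroes exactly the between-block (distinction) entries while preserving the within-block ones. The only difference is that you spell out the computation $(P_{B_i}\rho P_{B_i})_{jk}=(P_{B_i})_{jj}\rho_{jk}(P_{B_i})_{kk}$ that the paper merely asserts, which is a welcome bit of added rigor rather than a departure.
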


\begin{proof}
Since for any density matrix $\rho$, $\operatorname*{tr}\left[  \rho
^{2}\right]  =\sum_{i,j}\left\vert \rho_{ij}\right\vert ^{2}$\cite[p.
77]{fano:density}, we have: $h\left(  \rho\left(  \pi\right)  \right)
-h\left(  \rho\left(  \mathbf{0}\right)  \right)  =\left(
1-\operatorname*{tr}\left[  \rho\left(  \pi\right)  ^{2}\right]  \right)
-\left(  1-\operatorname*{tr}\left[  \rho\left(  \mathbf{0}\right)
^{2}\right]  \right)  =\operatorname*{tr}\left[  \rho\left(  \mathbf{0}%
\right)  ^{2}\right]  -\operatorname*{tr}\left[  \rho\left(  \pi\right)
^{2}\right]  =\sum_{i,j}\left(  \left\vert \rho_{ij}\left(  \mathbf{0}\right)
\right\vert ^{2}-\left\vert \rho_{ij}\left(  \pi\right)  \right\vert
^{2}\right)  $. If $\left(  u_{i},u_{i^{\prime}}\right)  \in
\operatorname*{dit}\left(  \pi\right)  $, then and only then are the
off-diagonal terms corresponding to $u_{i}$ and $u_{i^{\prime}}$ zeroed by the
L\"{u}ders operation.
\end{proof}

The fundamental theorem connects the concept of information-as-distinctions to
the process of `measurement' or classification which uses some attribute (like
parity in the example) or `observable' to make distinctions.

\begin{example}
[continued]In comparison with the matrix $\rho\left(  \mathbf{0}\right)  $ of
all entries $\frac{1}{6}$, the entries that got zeroed in the L\"{u}ders
operation $\rho\left(  \mathbf{0}\right)  \leadsto\rho\left(  \pi\right)  $
correspond to the distinctions created in the transition $\mathbf{0}=\left\{
\left\{  u_{1},...,u_{6}\right\}  \right\}  \leadsto\pi=\left\{  \left\{
u_{1},u_{3},u_{5}\right\}  ,\left\{  u_{2},u_{4},u_{6}\right\}  \right\}  $,
i.e., the odd-numbered faces were distinguished from the even-numbered faces
by the parity attribute. The increase in logical entropy = sum of the squares
of the off-diagonal elements that were zeroed = $h\left(  \pi\right)
-h\left(  \mathbf{0}\right)  =2\times9\times\left(  \frac{1}{6}\right)
^{2}=\frac{18}{36}=\frac{1}{2}$. The usual calculations of the two logical
entropies are: $h\left(  \pi\right)  =1-2\times\left(  \frac{1}{2}\right)
^{2}=\frac{1}{2}$ and $h\left(  \mathbf{0}\right)  =1-1=0$.
\end{example}

Since, in quantum mechanics, a projective measurement's effect on a density
matrix \textit{is} the L\"{u}ders mixture operation, that means that the
effects of the measurement are the above-described \textquotedblleft making
distinctions\textquotedblright\ by decohering or zeroing certain coherence
terms in the density matrix, and the sum of the absolute squares of the
coherences that were decohered is the increase in the logical entropy.

\section{Quantum Logical Information Theory: Commuting Observables}

The idea of information-as-distinctions carries over to quantum mechanics.

\begin{quotation}
\noindent\lbrack Information] is the notion of distinguishability abstracted
away from what we are distinguishing, or from the carrier of information.
...And we ought to develop a theory of information which generalizes the
theory of distinguishability to include these quantum properties...
.\ \cite[p. 155]{bennett:qinfo}
\end{quotation}

Let $F:V\rightarrow V$ be a self-adjoint operator (observable) on a
$n$-dimensional Hilbert space $V$ with the real eigenvalues $\phi_{1}%
,...,\phi_{I}$ and let $U=\left\{  u_{1},...,u_{n}\right\}  $ be an
orthonormal (ON) basis of eigenvectors of $F$. The quantum version of a dit, a
\textit{qudit}, is a pair of states definitely distinguishable by
\textit{some} observable\footnote{Any nondegenerate self-adjoint operator such
as $\sum_{k=1}^{n}kP_{\left[  u_{k}\right]  }$, where $P_{\left[
u_{k}\right]  }$ is the projection to the one-dimensional subspace generated
by $u_{k}$, will distinguish all the vectors in the orthonormal basis $U$%
.}--which is analogous classically to a pair $\left(  u,u^{\prime}\right)  $
of distinct elements of $U$ that are distinguishable by some partition (i.e.,
$\mathbf{1}$). In general, a \textit{qudit }is \textit{relativized to an
observable}--just as classically a distinction is a distinction \textit{of a
partition}. Then there is a set partition $\pi=\left\{  B_{i}\right\}
_{i=1,...,I}$ on the ON basis $U$ so that $B_{i}$ is a basis for the
eigenspace of the eigenvalue $\phi_{i}$ and $\left\vert B_{i}\right\vert $ is
the "multiplicity" (dimension of the eigenspace) of the eigenvalue $\phi_{i}$
for $i=1,...,I$. Note that the real-valued function $f:U\rightarrow%
\mathbb{R}
$ that takes each eigenvector in $u_{j}\in B_{i}\subseteq U$ to its eigenvalue
$\phi_{i}$ so that $f^{-1}\left(  \phi_{i}\right)  =B_{i}$ contains all the
information in the self-adjoint operator $F:V\rightarrow V$ since $F$ can be
reconstructed by defining it on the basis $U$ as $Fu_{j}=f\left(
u_{j}\right)  u_{j}$.

The generalization of `classical' logical entropy to quantum logical entropy
is straightforward using the usual ways that set-concepts generalize to
vector-space concepts: subsets $\rightarrow$ subspaces, set partitions
$\rightarrow$ direct-sum decompositions of subspaces\footnote{Hence the
`classical' logic of partitions on a set will generalize to the quantum logic
of direct-sum decompositions that is the dual to the usual quantum logic of
subspaces \cite{ell:qpl}.}, Cartesian products of sets $\rightarrow$ tensor
products of vector spaces, and ordered pairs $\left(  u_{k},u_{k^{\prime}%
}\right)  \in U\times U\rightarrow$ basis elements $u_{k}\otimes u_{k^{\prime
}}\in V\otimes V$. The eigenvalue function $f:U\rightarrow%
\mathbb{R}
$ determines a partition $\left\{  f^{-1}\left(  \phi_{i}\right)  \right\}
_{i\in I}$ on $U$ and the blocks in that partition generate the eigenspaces of
$F$ which form a direct-sum decomposition of $V$.

Classically, a \textit{dit of the partition} $\left\{  f^{-1}\left(  \phi
_{i}\right)  \right\}  _{i\in I}$ on $U$ is a pair $\left(  u_{k}%
,u_{k^{\prime}}\right)  $ of points in distinct blocks of the partition, i.e.,
$f\left(  u_{k}\right)  \neq f\left(  u_{k^{\prime}}\right)  $. Hence a
\textit{qudit of }$F$ is a pair $\left(  u_{k},u_{k^{\prime}}\right)  $
(interpreted as $u_{k}\otimes u_{k^{\prime}}$ in the context of $V\otimes V$)
of vectors in the eigenbasis definitely distinguishable by $F$, i.e.,
$f\left(  u_{k}\right)  \neq f\left(  u_{k^{\prime}}\right)  $, distinct
$F$-eigenvalues. Let $G:V\rightarrow V$ be another self-adjoint operator on
$V$ which commutes with $F$ so that we may then assume that $U$ is an
orthonormal basis of simultaneous eigenvectors of $F$ and $G$. Let $\left\{
\gamma_{j}\right\}  _{j\in J}$ be the set of eigenvalues of $G$ and let
$g:U\rightarrow%
\mathbb{R}
$ be the eigenvalue function so a pair $\left(  u_{k},u_{k^{\prime}}\right)  $
is a \textit{qudit of }$G$ if $g\left(  u_{k}\right)  \neq g\left(
u_{k^{\prime}}\right)  $, i.e., if the two eigenvectors have distinct
eigenvalues of $G$.

As in classical logical information theory, information is represented by
certain subsets--or, in the quantum case, subspaces--prior to the introduction
of any probabilities. Since the transition from classical to quantum logical
information theory is straightforward, it will be presented in table form in
Table 3a (which does not involve any probabilities)--where the qudits $\left(
u_{k},u_{k^{\prime}}\right)  $ are interpreted as $u_{k}\otimes u_{k^{\prime}%
}$.

\begin{center}%
\begin{center}
\includegraphics[
height=2.3638in,
width=4.1033in
]%
{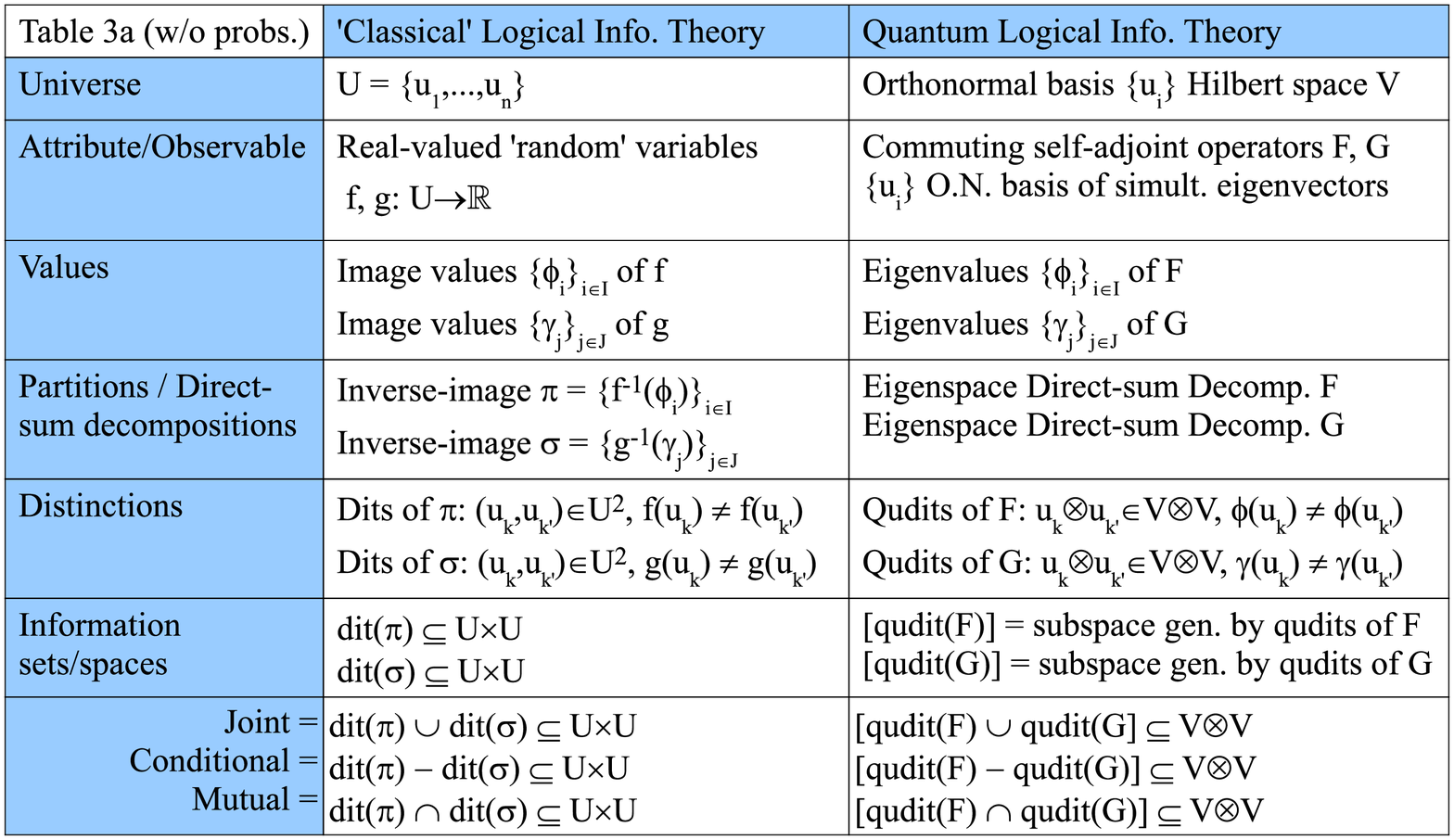}%
\end{center}
Table 3a: The parallel development of classical and quantum logical information

prior to probabilities.
\end{center}

\noindent The \textit{information subspace} associated with $F$ is the
subspace $\left[  qudit\left(  F\right)  \right]  \subseteq V\otimes V$
generated by the qudits $u_{k}\otimes u_{k^{\prime}}$ of $F$. If $F=\lambda I$
is a scalar multiple of the identity $I$, then it has no qudits so its
information space $\left[  qudit\left(  \lambda I\right)  \right]  $ is the
zero subspace. It is an easy implication of the Common Dits Theorem of
classical logical information theory (\cite[Proposition 1]{ell:countingdits}
or \cite[Theorem 1.4]{ell:partitions}) that any two nonzero information spaces
$\left[  qudit\left(  F\right)  \right]  $ and $\left[  qudit\left(  G\right)
\right]  $ have a nonzero intersection, i.e., have a nonzero mutual
information space. That is, there are always two eigenvectors $u_{k}$ and
$u_{k^{\prime}}$ that have different eigenvalues by both $F$ and $G$.

In a measurement, the observables do not provide the point probabilities; they
come from the pure (normalized) state $\psi$ being measured. Let $\left\vert
\psi\right\rangle =\sum_{j=1}^{n}\left\langle u_{j}|\psi\right\rangle
\left\vert u_{j}\right\rangle =\sum_{j=1}^{n}\alpha_{j}\left\vert
u_{j}\right\rangle $ be the resolution of $\left\vert \psi\right\rangle $ in
terms of the orthonormal basis $U=\left\{  u_{1},...,u_{n}\right\}  $ of
simultaneous eigenvectors for $F$ and $G$. Then $p_{j}=\alpha_{j}\alpha
_{j}^{\ast}$ ($\alpha_{j}^{\ast}$ is the complex conjugate of $\alpha_{j}$)
for $j=1,...,n$ are the point probabilities on $U$ and the pure state density
matrix $\rho\left(  \psi\right)  =\left\vert \psi\right\rangle \left\langle
\psi\right\vert $ (where $\left\langle \psi\right\vert =\left\vert
\psi\right\rangle ^{\dagger}$ is the conjugate-transpose) has the entries:
$\rho_{jk}\left(  \psi\right)  =\alpha_{j}\alpha_{k}^{\ast}$ so the diagonal
entries $\rho_{jj}\left(  \psi\right)  =\alpha_{j}\alpha_{j}^{\ast}=p_{j}$ are
the point probabilities. Table 3b gives the remaining parallel development
with the probabilities provided by the pure state $\psi$ where we write
$\rho\left(  \psi\right)  ^{\dagger}\rho\left(  \psi\right)  $ as $\rho\left(
\psi\right)  ^{2}$.%

\begin{center}
\includegraphics[
height=2.3828in,
width=3.8521in
]%
{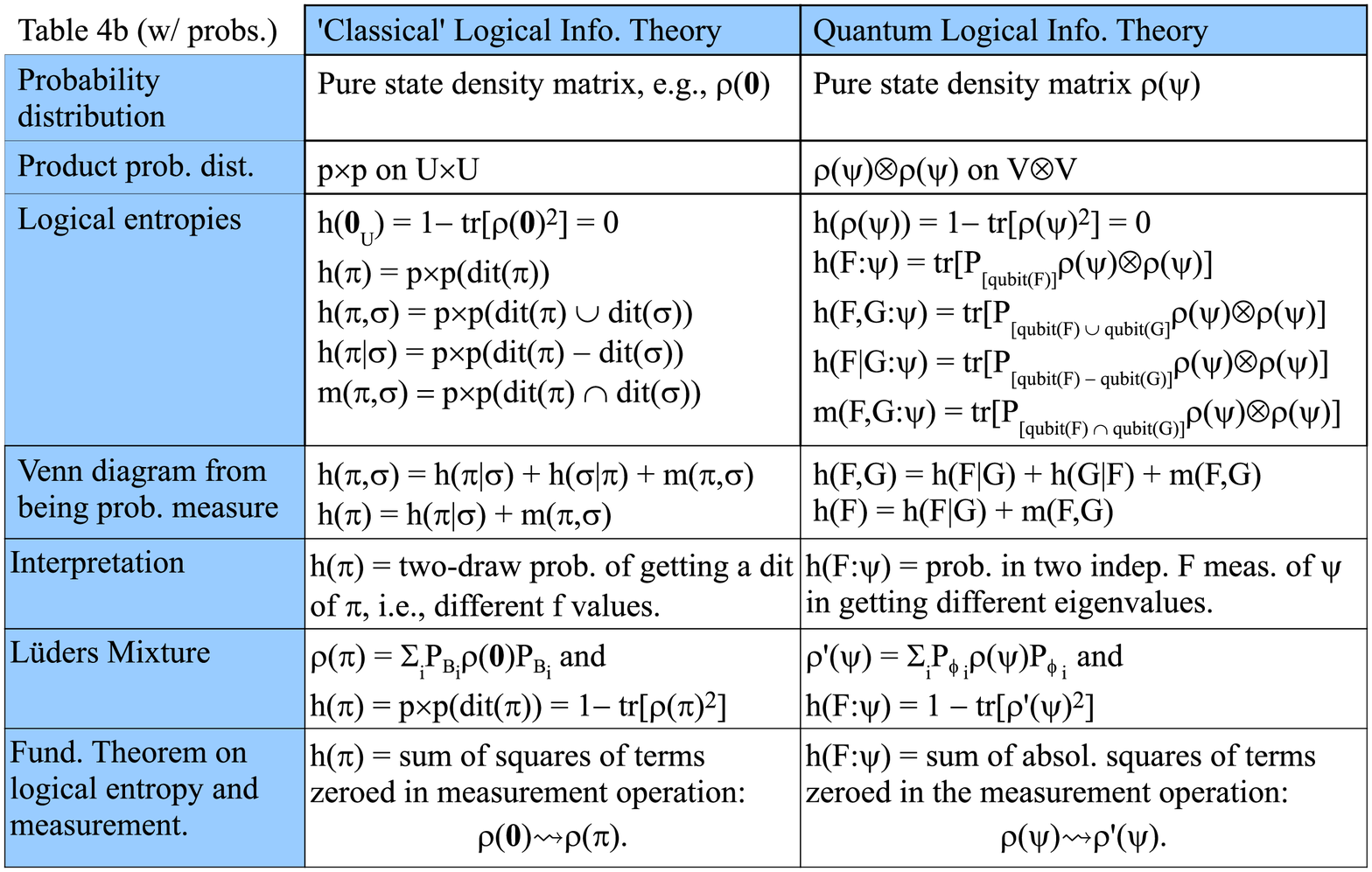}%
\end{center}

\begin{center}
Table 3b: The parallel development of classical and quantum logical entropies

for commuting $F$ and $G$.
\end{center}

The formula $h\left(  \rho\right)  =1-\operatorname*{tr}\left[  \rho
^{2}\right]  $ is hardly new. Indeed, $\operatorname*{tr}\left[  \rho
^{2}\right]  $ is usually called the \textit{purity} of the density matrix
since a state $\rho$ is \textit{pure} if and only if $\operatorname*{tr}%
\left[  \rho^{2}\right]  =1$ so $h\left(  \rho\right)  =0$, and otherwise
$\operatorname*{tr}\left[  \rho^{2}\right]  \,<1$ so $h\left(  \rho\right)
>0$ and the state is said to be \textit{mixed}. Hence the complement
$1-\operatorname*{tr}\left[  \rho^{2}\right]  $ has been called the
\textquotedblleft mixedness\textquotedblright\ \cite[p. 5]{jaeger:qinfo} or
\textquotedblleft impurity\textquotedblright\ of the state $\rho
$.\footnote{\noindent It is also called by the misnomer \textquotedblleft
linear entropy\textquotedblright\ \cite{buscemi:lin-entropy} even though it is
obviously quadratic in $\rho$--so we will not continue that usage. The logical
entropy is also the quadratic special case of the Tsallis-Havrda-Charvat
entropy (\cite{havrda:alpha}, \cite{tsallis:entropy}), and the logical special
case \cite{ell:countingdits} of C. R. Rao's quadratic entropy \cite{rao:div}.}
What is new is not the formula but the whole backstory of partition logic
outlined above which gives the logical notion of entropy arising out of
partition logic as the normalized counting measure on ditsets--just as logical
probability arises out of Boolean subset logic as the normalized counting
measure on subsets. The basic idea of information is differences,
distinguishability, and distinctions (\cite{ell:countingdits}, \cite{ell:lit}%
), so the logical notion of entropy is the measure of the distinctions or dits
of a partition and the corresponding quantum version is the measure of the
qudits of an observable.

The classical dit-bit transform connecting the logical theory to the Shannon
theory also carries over to the quantum version. Writing the quantum logical
entropy of a density matrix $\rho$ as $h\left(  \rho\right)
=\operatorname*{tr}\left[  \rho\left(  1-\rho\right)  \right]  $, the quantum
version of the dit-bit transform $\left(  1-\rho\right)  \leadsto-\log\left(
\rho\right)  $ yields the usual Von Neumann entropy $S\left(  \rho\right)
=-\operatorname*{tr}\left[  \rho\log\left(  \rho\right)  \right]  $\cite[p.
510]{nielsen-chuang:bible}. The fundamental theorem connecting logical entropy
and the operation of classification-measurement also carries over to the
quantum case.

\section{Two Theorems about quantum logical entropy}

Classically, a pair of elements $\left(  u_{j},u_{k}\right)  $ either `cohere'
together in the same block of a partition on $U$, i.e., are an indistinction
of the partition, or they don't, i.e., are a distinction of the partition. In
the quantum case, the nonzero off-diagonal entries $\alpha_{j}\alpha_{k}%
^{\ast}$ in the pure state density matrix $\rho\left(  \psi\right)
=\left\vert \psi\right\rangle \left\langle \psi\right\vert $ are called
quantum \textquotedblleft coherences\textquotedblright\ (\cite[p.
303]{cohen-t:QM1}; \cite[p.177]{auletta:qm}) because they give the amplitude
of the eigenstates $\left\vert u_{j}\right\rangle $ and $\left\vert
u_{k}\right\rangle $ \textquotedblleft cohering\textquotedblright\ together in
the coherent superposition state vector $\left\vert \psi\right\rangle
=\sum_{j=1}^{n}\left\langle u_{j}|\psi\right\rangle \left\vert u_{j}%
\right\rangle =\sum_{j}\alpha_{j}\left\vert u_{j}\right\rangle $. The
coherences are classically modelled by the nonzero off-diagonal entries
$\sqrt{p_{j}p_{k}}$ for the indistinctions $\left(  u_{j},u_{k}\right)  \in
B_{i}\times B_{i}$, i.e., coherences $\approx$ indistinctions.

For an observable $F$, let $\phi:U\rightarrow%
\mathbb{R}
$ be for $F$-eigenvalue function assigning the eigenvalue $\phi\left(
u_{j}\right)  =\phi_{i}$ for each $u_{j}$ in the ON basis $U=\left\{
u_{1},...,u_{n}\right\}  $ of $F$-eigenvectors. The range of $\phi$ is the set
of $F$-eigenvalues $\left\{  \phi_{1},...,\phi_{I}\right\}  $. Let
$P_{\phi_{i}}:V\rightarrow V$ be the projection matrix in the $U$-basis to the
eigenspace of $\phi_{i}$. The projective $F$-measurement of the state $\psi$
transforms the pure state density matrix $\rho\left(  \psi\right)  $ to yield
the L\"{u}ders mixture density matrix $\rho^{\prime}\left(  \psi\right)
=\sum_{i=1}^{I}P_{\phi_{i}}\rho\left(  \psi\right)  P_{\phi_{i}}$ \cite[p.
279]{auletta:qm}. The off-diagonal elements of $\rho\left(  \psi\right)  $
that are zeroed in $\rho^{\prime}\left(  \psi\right)  $ are the coherences
(quantum indistinctions or \textit{quindits}) that are turned into
`decoherences' (quantum distinctions or qudits of the observable being measured).

For any observable $F$ and a pure state $\psi$, a quantum logical entropy was
defined as $h\left(  F:\psi\right)  =\operatorname*{tr}\left[  P_{\left[
qudit\left(  F\right)  \right]  }\rho\left(  \psi\right)  \otimes\rho\left(
\psi\right)  \right]  $. That definition was the quantum generalization of the
`classical' logical entropy defined as $h\left(  \pi\right)  =p\times p\left(
\operatorname*{dit}\left(  \pi\right)  \right)  $. When a projective
$F$-measurement is performed on $\psi$, the pure state density matrix
$\rho\left(  \psi\right)  $ is transformed into the mixed state density matrix
by the quantum L\"{u}ders mixture operation which then defines the quantum
logical entropy $h\left(  \rho^{\prime}\left(  \psi\right)  \right)
=1-\operatorname*{tr}\left[  \rho^{\prime}\left(  \psi\right)  ^{2}\right]  $.
The first test of how the quantum logical entropy notions fit together is
showing that these two entropies are the same: $h\left(  F:\psi\right)
=h\left(  \rho^{\prime}\left(  \psi\right)  \right)  $. The proof shows that
they are both equal to classical logical entropy of the partition $\pi\left(
F:\psi\right)  $ defined on the ON basis $U=\left\{  u_{1},...,u_{n}\right\}
$ of $F$-eigenvectors by the $F$-eigenvalues with the point probabilties
$p_{j}=\alpha_{j}^{\ast}\alpha_{j}$. That is, the inverse-images $B_{i}%
=\phi^{-1}\left(  \phi_{i}\right)  $ of the eigenvalue function $\phi
:U\rightarrow%
\mathbb{R}
$ define the eigenvalue partition $\pi\left(  F:\psi\right)  =\left\{
B_{1},...,B_{I}\right\}  $ on the ON basis $U=\left\{  u_{1},...,u_{n}%
\right\}  $ with the point probabilities $p_{j}=\alpha_{j}^{\ast}\alpha_{j}$
provided by the state $\left\vert \psi\right\rangle $ for $j=1,...,n$. The
classical logical entropy of that partition is: $h\left(  \pi\left(
F:\psi\right)  \right)  =1-\sum_{i=1}^{I}p\left(  B_{i}\right)  ^{2}$ where
$p\left(  B_{i}\right)  =\sum_{u_{j}\in B_{i}}p_{j}$.

We first show that $h\left(  F:\psi\right)  =\operatorname*{tr}\left[
P_{\left[  qudit\left(  F\right)  \right]  }\rho\left(  \psi\right)
\otimes\rho\left(  \psi\right)  \right]  =h\left(  \pi\left(  F:\psi\right)
\right)  $. Now $qudit\left(  F\right)  =\left\{  u_{j}\otimes u_{k}%
:\phi\left(  u_{j}\right)  \neq\phi\left(  u_{k}\right)  \right\}  $ and
$\left[  qudit\left(  F\right)  \right]  $ is the subspace of $V\otimes V$
generated by it. The $n\times n$ pure state density matrix $\rho\left(
\psi\right)  $ has the entries $\rho_{jk}\left(  \psi\right)  =\alpha
_{j}\alpha_{k}^{\ast}$, and $\rho\left(  \psi\right)  \otimes\rho\left(
\psi\right)  $ is an $n^{2}\times n^{2}$ matrix. The projection matrix
$P_{\left[  qudit\left(  F\right)  \right]  }$ is an $n^{2}\times n^{2}$
diagonal matrix with the diagonal entries, indexed by $j,k=1,...,n$: $\left[
P_{\left[  qudit\left(  F\right)  \right]  }\right]  _{jjkk}=1$ if
$\phi\left(  u_{j}\right)  \neq\phi\left(  u_{k}\right)  $ and $0$ otherwise.
Thus in the product $P_{\left[  qudit(F)\right]  }\rho\left(  \psi\right)
\otimes\rho\left(  \psi\right)  $, the nonzero diagonal elements are the
$p_{j}p_{k}$ where $\phi\left(  u_{j}\right)  \neq\phi\left(  u_{k}\right)  $
and so the trace is $\sum_{j.k=1}^{n}\left\{  p_{j}p_{k}:\phi\left(
u_{j}\right)  \neq\phi\left(  u_{k}\right)  \right\}  $ which, by definition,
is $h\left(  F:\psi\right)  $. Since $\sum_{j=1}^{n}p_{j}=\sum_{i=1}%
^{I}p\left(  B_{i}\right)  =1$, $\left(  \sum_{i=1}^{I}p\left(  B_{i}\right)
\right)  ^{2}=1=\sum_{i=1}^{I}p\left(  B_{i}\right)  ^{2}+\sum_{i\neq
i^{\prime}}p\left(  B_{i}\right)  p\left(  B_{i^{\prime}}\right)  $. By
grouping the $p_{j}p_{k}$ in the trace according to the blocks of $\pi\left(
F:\psi\right)  $, we have:

\begin{center}
$h\left(  F:\psi\right)  =\operatorname*{tr}\left[  P_{\left[
qudit(F)\right]  }\rho\left(  \psi\right)  \otimes\rho\left(  \psi\right)
\right]  =\sum_{j.k=1}^{n}\left\{  p_{j}p_{k}:\phi\left(  u_{j}\right)
\neq\phi\left(  u_{k}\right)  \right\}  $

$=\sum_{i\neq i^{\prime}}\sum\left\{  p_{j}p_{k}:u_{j}\in B_{i},u_{k}\in
B_{i^{\prime}}\right\}  =\sum_{i\neq i^{\prime}}p\left(  B_{i}\right)
p\left(  B_{i^{\prime}}\right)  $

$=1-\sum_{i=1}^{I}p\left(  B_{i}\right)  ^{2}=h\left(  \pi\left(
F:\psi\right)  \right)  $.
\end{center}

To show that $h\left(  \rho^{\prime}\left(  \psi\right)  \right)
=1-\operatorname*{tr}\left[  \rho^{\prime}\left(  \psi\right)  ^{2}\right]
=h\left(  \pi\left(  F:\psi\right)  \right)  $ for $\rho^{\prime}\left(
\psi\right)  =\sum_{i=1}^{I}P_{\phi_{i}}\rho\left(  \psi\right)  P_{\phi_{i}}%
$, we need to compute $\operatorname*{tr}\left[  \rho^{\prime}\left(
\psi\right)  ^{2}\right]  $. An off-diagonal element in $\rho_{jk}\left(
\psi\right)  =\alpha_{j}\alpha_{k}^{\ast}$ of $\rho\left(  \psi\right)  $
survives (i.e., is not zeroed and has the same value) the L\"{u}ders operation
if and only if $\phi\left(  u_{j}\right)  =\phi\left(  u_{k}\right)  $. Hence
the $j^{th}$ diagonal element of $\rho^{\prime}\left(  \psi\right)  ^{2}$ is

\begin{center}
$\sum_{k=1}^{n}\left\{  \alpha_{j}^{\ast}\alpha_{k}\alpha_{j}\alpha_{k}^{\ast
}:\phi\left(  u_{j}\right)  =\phi\left(  u_{k}\right)  \right\}  =\sum
_{k=1}^{n}\left\{  p_{j}p_{k}:\phi\left(  u_{j}\right)  =\phi\left(
u_{k}\right)  \right\}  =p_{j}p\left(  B_{i}\right)  $
\end{center}

\noindent where $u_{j}\in B_{i}$. Then grouping the $j^{th}$ diagonal elements
for $u_{j}\in B_{i}$ gives $\sum_{u_{j}\in B_{i}}p_{j}p\left(  B_{i}\right)
=p\left(  B_{i}\right)  ^{2}$. Hence the whole trace is: $\operatorname*{tr}%
\left[  \rho^{\prime}\left(  \psi\right)  ^{2}\right]  =\sum_{i=1}^{I}p\left(
B_{i}\right)  ^{2}$ and thus:

\begin{center}
$h\left(  \rho^{\prime}\left(  \psi\right)  \right)  =1-\operatorname*{tr}%
\left[  \rho^{\prime}\left(  \psi\right)  ^{2}\right]  =1-\sum_{i=1}%
^{I}p\left(  B_{i}\right)  ^{2}=h\left(  F:\psi\right)  $.
\end{center}

\noindent This completes the proof of the following theorem.

\begin{theorem}
$h\left(  F:\psi\right)  =h\left(  \pi\left(  F:\psi\right)  \right)
=h\left(  \rho^{\prime}\left(  \psi\right)  \right)  $.$\blacksquare$
\end{theorem}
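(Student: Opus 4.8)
The plan is to route both outer equalities through the classical logical entropy $h\left(\pi\left(F:\psi\right)\right)=1-\sum_{i=1}^{I}p\left(B_{i}\right)^{2}$ of the eigenvalue partition, treating it as the common value, and then to prove $h\left(F:\psi\right)=h\left(\pi\left(F:\psi\right)\right)$ and $h\left(\rho'\left(\psi\right)\right)=h\left(\pi\left(F:\psi\right)\right)$ separately. The whole argument rests on a single structural observation: in the eigenbasis $U$, both the qudit projection $P_{\left[qudit\left(F\right)\right]}$ and the L\"{u}ders map are diagonal (respectively block-diagonal) objects whose pattern is dictated precisely by whether a pair $\left(u_{j},u_{k}\right)$ lands in the same block $B_{i}=\phi^{-1}\left(\phi_{i}\right)$ or in distinct blocks. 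Consequently each trace reduces to a sum of products $p_{j}p_{k}$ sorted by the partition, where $p_{j}=\alpha_{j}\alpha_{j}^{\ast}$.

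For the first equality I would work in the basis $\left\{u_{j}\otimes u_{k}\right\}$ of $V\otimes V$. There $P_{\left[qudit\left(F\right)\right]}$ is the diagonal matrix with entry $1$ exactly when $\phi\left(u_{j}\right)\neq\phi\left(u_{k}\right)$, while the $\left(jk,jk\right)$ diagonal entry of $\rho\left(\psi\right)\otimes\rho\left(\psi\right)$ is $\rho_{jj}\rho_{kk}=p_{j}p_{k}$. Taking the trace of the product therefore extracts exactly $\sum\left\{p_{j}p_{k}:\phi\left(u_{j}\right)\neq\phi\left(u_{k}\right)\right\}$. Regrouping this sum by the blocks gives $\sum_{i\neq i'}p\left(B_{i}\right)p\left(B_{i'}\right)$, and since $\sum_{i}p\left(B_{i}\right)=1$, expanding $1=\left(\sum_{i}p\left(B_{i}\right)\right)^{2}$ yields $\sum_{i\neq i'}p\left(B_{i}\right)p\left(B_{i'}\right)=1-\sum_{i}p\left(B_{i}\right)^{2}$, which is $h\left(\pi\left(F:\psi\right)\right)$ by definition.

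For the second equality I would compute $\operatorname{tr}\left[\rho'\left(\psi\right)^{2}\right]$ directly. The L\"{u}ders operation $\rho'\left(\psi\right)=\sum_{i}P_{\phi_{i}}\rho\left(\psi\right)P_{\phi_{i}}$ leaves an off-diagonal entry $\rho_{jk}\left(\psi\right)=\alpha_{j}\alpha_{k}^{\ast}$ unchanged when $\phi\left(u_{j}\right)=\phi\left(u_{k}\right)$ (same block) and zeroes it otherwise, so $\rho'\left(\psi\right)$ is block-diagonal along the partition. Using Hermiticity, for $u_{j}\in B_{i}$ the $j$-th diagonal entry of $\rho'\left(\psi\right)^{2}$ is $\sum_{u_{k}\in B_{i}}\alpha_{j}^{\ast}\alpha_{k}\alpha_{j}\alpha_{k}^{\ast}=\sum_{u_{k}\in B_{i}}p_{j}p_{k}=p_{j}\,p\left(B_{i}\right)$; summing over $u_{j}\in B_{i}$ gives $p\left(B_{i}\right)^{2}$ for each block, hence $\operatorname{tr}\left[\rho'\left(\psi\right)^{2}\right]=\sum_{i}p\left(B_{i}\right)^{2}$ and $h\left(\rho'\left(\psi\right)\right)=1-\sum_{i}p\left(B_{i}\right)^{2}=h\left(\pi\left(F:\psi\right)\right)$.

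The individual steps are mechanical once the diagonal structure is identified; the one place needing care—and what I would call the crux—is the regrouping by blocks, namely recognizing that ``summing $p_{j}p_{k}$ over distinct-eigenvalue pairs'' and ``summing the surviving $\left\vert\rho_{jk}\right\vert^{2}$ over same-eigenvalue pairs'' are exactly the off-block and on-block halves of the identity $1=\left(\sum_{i}p\left(B_{i}\right)\right)^{2}=\sum_{i}p\left(B_{i}\right)^{2}+\sum_{i\neq i'}p\left(B_{i}\right)p\left(B_{i'}\right)$. Keeping the index ranges and the complex-conjugate factors straight in the survival analysis for $\rho'\left(\psi\right)^{2}$ is the only subtle point; everything else follows from the diagonal-entry extraction of a trace, so I expect no genuine obstacle beyond this bookkeeping.
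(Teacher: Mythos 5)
Your proposal is correct and follows essentially the same route as the paper's own proof: both equalities are routed through the common value $h\left(\pi\left(F:\psi\right)\right)=1-\sum_{i}p\left(B_{i}\right)^{2}$, with the first obtained by reading off the diagonal of $P_{\left[qudit\left(F\right)\right]}\rho\left(\psi\right)\otimes\rho\left(\psi\right)$ and regrouping $\sum p_{j}p_{k}$ by blocks, and the second by the survival analysis of off-diagonal entries under the L\"{u}ders operation giving $\operatorname{tr}\left[\rho'\left(\psi\right)^{2}\right]=\sum_{i}p\left(B_{i}\right)^{2}$. No substantive differences from the paper's argument.
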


Measurement creates distinctions, i.e., turns coherences into
`decoherences'--which, classically, is the operation of distinguishing
elements by classifying them according to some attribute like classifying the
faces of a die by their parity. The fundamental theorem about quantum logical
entropy and projective measurement shows how the quantum logical entropy
created (starting with $h\left(  \rho\left(  \psi\right)  \right)  =0$ for the
pure state $\psi$) by the measurement can be computed directly from the
coherences of $\rho\left(  \psi\right)  $ that are decohered in $\rho^{\prime
}\left(  \psi\right)  $.

\begin{theorem}
[Fundamental]The increase in quantum logical entropy, $h\left(  F:\psi\right)
=h\left(  \rho^{\prime}\left(  \psi\right)  \right)  $, due to the
$F$-measurement of the pure state $\psi$ is the sum of the absolute squares of
the nonzero off-diagonal terms [coherences] in $\rho\left(  \psi\right)  $
that are zeroed [`decohered'] in the post-measurement mixed state density
matrix $\rho^{\prime}\left(  \psi\right)  =\sum_{i=1}^{I}P_{\phi_{i}}%
\rho\left(  \psi\right)  P_{\phi_{i}}$.
\end{theorem}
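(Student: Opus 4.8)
The plan is to reduce the statement to a short entrywise computation that runs exactly parallel to the classical Fundamental theorem, using three ingredients already in hand: the Fano identity $\operatorname{tr}[\rho^2]=\sum_{j,k}\left\vert\rho_{jk}\right\vert^2$ quoted in the proof of the classical case; the fact that $\psi$ is pure, so that $h(\rho(\psi))=1-\operatorname{tr}[\rho(\psi)^2]=0$ and the ``increase'' is simply $h(\rho'(\psi))$; and the already-noted description of the L\"uders operation, namely that an off-diagonal entry $\rho_{jk}(\psi)=\alpha_j\alpha_k^{\ast}$ survives unchanged when $\phi(u_j)=\phi(u_k)$ and is zeroed precisely when $\phi(u_j)\neq\phi(u_k)$, i.e., exactly when $(u_j,u_k)$ is a qudit of $F$.

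First I would write the increase in quantum logical entropy as
\[h(\rho'(\psi))-h(\rho(\psi))=\operatorname{tr}[\rho(\psi)^2]-\operatorname{tr}[\rho'(\psi)^2],\]
which, since $h(\rho(\psi))=0$, coincides with $h(\rho'(\psi))$ and hence with $h(F:\psi)$ by the preceding theorem. Next I would apply the Fano identity to each trace, obtaining $\sum_{j,k}\left(\left\vert\rho_{jk}(\psi)\right\vert^2-\left\vert\rho'_{jk}(\psi)\right\vert^2\right)$. I would then sort the terms by the L\"uders bookkeeping: the diagonal entries are fixed (they remain the point probabilities $p_j$), so those terms cancel; the off-diagonal entries with $\phi(u_j)=\phi(u_k)$ are likewise fixed and cancel; and the off-diagonal entries with $\phi(u_j)\neq\phi(u_k)$ are sent to zero in $\rho'(\psi)$, so for them $\left\vert\rho'_{jk}(\psi)\right\vert^2=0$. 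The sum therefore collapses to $\sum\left\{\left\vert\rho_{jk}(\psi)\right\vert^2:\phi(u_j)\neq\phi(u_k)\right\}$, which is exactly the sum of the absolute squares of the decohered coherences (the already-zero ones among these contribute nothing, so restricting to the \emph{nonzero} off-diagonal terms leaves the value unchanged).

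The hard part is not any computation but the one structural point on which everything rests: confirming that the L\"uders operation acts entrywise as a projection onto the coherence pattern of $F$, zeroing $\rho_{jk}$ exactly when $(u_j,u_k)\in qudit(F)$ and fixing every other entry. In the $U$-basis each $P_{\phi_i}$ is the diagonal $0$--$1$ projector onto $B_i$, so $P_{\phi_i}\rho(\psi)P_{\phi_i}$ retains $\rho_{jk}(\psi)$ only for $u_j,u_k\in B_i$; summing over $i$ gives $\rho'_{jk}(\psi)=\rho_{jk}(\psi)$ when $u_j,u_k$ lie in a common block and $0$ otherwise. This is precisely the claim already used to evaluate $\operatorname{tr}[\rho'(\psi)^2]=\sum_i p(B_i)^2$ in the proof of the preceding theorem, so I may simply invoke it rather than rederive it. I would close by noting that, because $h(\rho(\psi))=0$, the theorem says the entire quantum logical entropy produced by the measurement is accounted for coherence by coherence by the off-diagonal amplitudes the measurement destroys -- the exact quantum analogue of the classical Fundamental theorem.
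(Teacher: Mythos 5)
Your proposal is correct and follows essentially the same route as the paper's own proof: compute $h(\rho'(\psi))-h(\rho(\psi))$ as a difference of traces, apply the Fano identity $\operatorname{tr}[\rho^2]=\sum_{j,k}\left\vert\rho_{jk}\right\vert^2$, and observe that an off-diagonal entry is zeroed by the L\"{u}ders operation exactly when $(u_j,u_k)$ is a qudit of $F$, so only those terms survive the cancellation. You simply make explicit two points the paper compresses (that $h(\rho(\psi))=0$ for the pure state and the entrywise verification that $\sum_i P_{\phi_i}\rho(\psi)P_{\phi_i}$ preserves in-block entries and kills cross-block ones, which the paper established in the proof of the preceding theorem), which is fine.
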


\begin{proof}
$h\left(  \rho^{\prime}\left(  \psi\right)  \right)  -h\left(  \rho\left(
\psi\right)  \right)  =\left(  1-\operatorname*{tr}\left[  \rho^{\prime
}\left(  \psi\right)  ^{2}\right]  \right)  -\left(  1-\operatorname*{tr}%
\left[  \rho\left(  \psi\right)  ^{2}\right]  \right)  =\sum_{jk}\left(
\left\vert \rho_{jk}\left(  \psi\right)  \right\vert ^{2}-\left\vert \rho
_{jk}^{\prime}\left(  \psi\right)  \right\vert ^{2}\right)  $. If $u_{j}$ and
$u_{k}$ are a qudit of $F$, then and only then are the corresponding
off-diagonal terms zeroed by the L\"{u}ders mixture operation $\sum_{i=1}%
^{I}P_{\phi_{i}}\rho\left(  \psi\right)  P_{\phi_{i}}$ to obtain $\rho
^{\prime}\left(  \psi\right)  $ from $\rho\left(  \psi\right)  $.
\end{proof}

Density matrices have long been a standard part of the machinery of quantum
mechanics. The Fundamental Theorem for logical entropy and measurement shows
there is a simple, direct, and quantitative connection between density
matrices and logical entropy. The Theorem directly connects the changes in the
density matrix due to a measurement (sum of absolute squares of zeroed
off-diagonal terms) with the increase in logical entropy due to the
$F$-measurement $h\left(  F:\psi\right)  =\operatorname*{tr}\left[  P_{\left[
qudit(F)\right]  }\rho\left(  \psi\right)  \otimes\rho\left(  \psi\right)
\right]  =h\left(  \rho^{\prime}\left(  \psi\right)  \right)  $ (where
$h\left(  \rho\left(  \psi\right)  \right)  =0$ for the pure state $\psi$).

This direct quantitative connection between state discrimination and quantum
logical entropy reinforces the judgment of Boaz Tamir and Eli Cohen
(\cite{tamir-cohen:logicalentropy}, \cite{tamir-cohen:hilbert-schmidt}) that
quantum logical entropy is a natural and informative entropy concept for
quantum mechanics.

\begin{quotation}
\noindent We find this framework of partitions and distinction most suitable
(at least conceptually) for describing the problems of quantum state
discrimination, quantum cryptography and in general, for discussing quantum
channel capacity. In these problems, we are basically interested in a distance
measure between such sets of states, and this is exactly the kind of knowledge
provided by logical entropy [Reference to \cite{ell:countingdits}]. \cite[p.
1]{tamir-cohen:logicalentropy}
\end{quotation}

\noindent Moreover, the quantum logical entropy has a simple \textquotedblleft
two-draw probability\textquotedblright\ interpretation, i.e., $h\left(
F:\psi\right)  =h\left(  \rho^{\prime}\left(  \psi\right)  \right)  $ is the
probability that two independent $F$-measurements of $\psi$ will yield
distinct $F$-eigenvalues, i.e., will yield a qudit of $F$. In contrast, the
Von Neumann entropy has no such simple interpretation and there seems to be no
such intuitive connection between pre- and post-measurement density matrices
and Von Neumann entropy--although Von Neumann entropy also increases in a
projective measurement \cite[Theorem 11.9, p. 515]{nielsen-chuang:bible}.

\section{Quantum Logical Information Theory: Non-commuting Observables}

\subsection{Classical\ logical information theory with two sets $X$ and $Y$}

The usual (`classical') logical information theory for a probability
distribution $\left\{  p\left(  x,y\right)  \right\}  $ on $X\times Y$
(finite) in effect uses the discrete partition on $X$ and $Y$ \cite{ell:lit}.
For the general case of quantum logical entropy for not-necessarily commuting
observables, we need to first briefly develop the classical case with general
partitions on $X$ and $Y$.

Given two finite sets $X$ and $Y$ and real-valued functions $f:X\rightarrow%
\mathbb{R}
$ with values $\left\{  \phi_{i}\right\}  _{i=1}^{I}$ and $g:Y\rightarrow%
\mathbb{R}
$ with values $\left\{  \gamma_{j}\right\}  _{j=1}^{J}$, each function induces
a partition on its domain:

\begin{center}
$\pi=\left\{  f^{-1}\left(  \phi_{i}\right)  \right\}  _{i\in I}=\left\{
B_{1},...,B_{I}\right\}  $ on $X$, and $\sigma=\left\{  g^{-1}\left(
\gamma_{j}\right)  \right\}  _{j\in J}=\left\{  C_{1},...,C_{J}\right\}  $ on
$Y$.
\end{center}

We need to define logical entropies on $X\times Y$ but first we need to define
the ditsets or information sets.

A partition $\pi=\left\{  B_{1},...,B_{I}\right\}  $ on $X$ and a partition
$\sigma=\left\{  C_{1},...,C_{J}\right\}  $ on $Y$ define a \textit{product
partition} $\pi\times\sigma$ on $X\times Y$ whose blocks are $\left\{
B_{i}\times C_{j}\right\}  _{i,j}$. Then $\pi$ induces $\pi\times
\mathbf{0}_{Y}$ on $X\times Y$ (where $\mathbf{0}_{Y}$ is the indiscrete
partition on $Y$) and $\sigma$ induces $\mathbf{0}_{X}\times\sigma$ on
$X\times Y$. The corresponding ditsets or information sets are:

\begin{itemize}
\item $\operatorname*{dit}\left(  \pi\times\mathbf{0}_{Y}\right)  =\left\{
\left(  \left(  x,y\right)  ,\left(  x^{\prime},y^{\prime}\right)  \right)
:f\left(  x\right)  \neq f\left(  x^{\prime}\right)  \right\}  \subseteq
\left(  X\times Y\right)  ^{2}$;

\item $\operatorname*{dit}\left(  \mathbf{0}_{X}\times\sigma\right)  =\left\{
\left(  \left(  x,y\right)  ,\left(  x^{\prime},y^{\prime}\right)  \right)
:g\left(  y\right)  \neq g\left(  y^{\prime}\right)  \right\}  \subseteq
\left(  X\times Y\right)  ^{2}$;

\item $\operatorname*{dit}\left(  \pi\times\sigma\right)  =\operatorname*{dit}%
\left(  \pi\times\mathbf{0}_{Y}\right)  \cup\operatorname*{dit}\left(
\mathbf{0}_{X}\times\sigma\right)  $; and so forth.
\end{itemize}

Given a joint probability distribution $p:X\times Y\rightarrow\left[
0,1\right]  $, the product probability distribution is $p\times p:\left(
X\times Y\right)  ^{2}\rightarrow\left[  0,1\right]  $.

All the logical entropies are just the product probabilities of the ditsets
and their union, differences, and intersection:

\begin{itemize}
\item $h\left(  \pi\times\mathbf{0}_{Y}\right)  =p\times p\left(
\operatorname*{dit}\left(  \pi\times\mathbf{0}_{Y}\right)  \right)  $;

\item $h\left(  \mathbf{0}_{X}\times\sigma\right)  =p\times p\left(
\operatorname*{dit}\left(  \mathbf{0}_{X}\times\sigma\right)  \right)  $;

\item $h\left(  \pi\times\sigma\right)  =p\times p\left(  \operatorname*{dit}%
\left(  \pi\times\sigma\right)  \right)  =p\times p\left(  \operatorname*{dit}%
\left(  \pi\times\mathbf{0}_{Y}\right)  \cup\operatorname*{dit}\left(
\mathbf{0}_{X}\times\sigma\right)  \right)  $;

\item $h\left(  \pi\times\mathbf{0}_{Y}|\mathbf{0}_{X}\times\sigma\right)
=p\times p\left(  \operatorname*{dit}\left(  \pi\times\mathbf{0}_{Y}\right)
-\operatorname*{dit}\left(  \mathbf{0}_{X}\times\sigma\right)  \right)  $;

\item $h\left(  \mathbf{0}_{X}\times\sigma|\pi\times\mathbf{0}_{Y}\right)
=p\times p\left(  \operatorname*{dit}\left(  \mathbf{0}_{X}\times
\sigma\right)  -\operatorname*{dit}\left(  \pi\times\mathbf{0}_{Y}\right)
\right)  $;

\item $m\left(  \pi\times\mathbf{0}_{Y},\mathbf{0}_{X}\times\sigma\right)
=p\times p\left(  \operatorname*{dit}\left(  \pi\times\mathbf{0}_{Y}\right)
\cap\operatorname*{dit}\left(  \mathbf{0}_{X}\times\sigma\right)  \right)  $.
\end{itemize}

All the logical entropies have the usual two-draw probability interpretation
where the two independent draws from $X\times Y$ are $\left(  x,y\right)  $
and $\left(  x^{\prime},y^{\prime}\right)  $ and can be interpreted in terms
of the $f$-values and $g$-values:

\begin{itemize}
\item $h\left(  \pi\times\mathbf{0}_{Y}\right)  $ = probability of getting
distinct $f$-values;

\item $h\left(  \mathbf{0}_{X}\times\sigma\right)  $ = probability of getting
distinct $g$-values;

\item $h\left(  \pi\times\sigma\right)  $ = probability of getting distinct
$f$ or $g$ values;

\item $h\left(  \pi\times\mathbf{0}_{Y}|\mathbf{0}_{X}\times\sigma\right)  $ =
probability of getting distinct $f$-values but same $g$-values;

\item $h\left(  \mathbf{0}_{X}\times\sigma|\pi\times\mathbf{0}_{Y}\right)  $ =
probability of getting distinct $g$-values but same $f$-values;

\item $m\left(  \pi\times\mathbf{0}_{Y},\mathbf{0}_{X}\times\sigma\right)  $ =
probability of getting distinct $f$ and $g$ values.
\end{itemize}

We have defined all the logical entropies by the general method of the product
probabilities on the ditsets. In the first three cases, $h\left(  \pi
\times\mathbf{0}_{Y}\right)  $, $h\left(  \mathbf{0}_{X}\times\sigma\right)
$, and $h\left(  \pi\times\sigma\right)  $, they were the logical entropies of
partitions on $X\times Y$ so they could equivalently be defined using density
matrices. The case of $h\left(  \pi\times\sigma\right)  $ illustrates the
general case. If $\rho\left(  \pi\right)  $ is the density matrix defined for
$\pi$ on $X$ and $\rho\left(  \sigma\right)  $ the density matrix for $\sigma$
on $Y$, then $\rho\left(  \pi\times\sigma\right)  =\rho\left(  \pi\right)
\otimes\rho\left(  \sigma\right)  $ is the density matrix for $\pi\times
\sigma$ defined on $X\times Y$, and:

\begin{center}
$h\left(  \pi\times\sigma\right)  =1-\operatorname*{tr}\left[  \rho\left(
\pi\times\sigma\right)  ^{2}\right]  $.
\end{center}

The marginal distributions: $p_{X}\left(  x\right)  =\sum_{y}p\left(
x,y\right)  $ and $p_{Y}\left(  y\right)  =\sum_{x}p\left(  x,y\right)  $.
Since $\pi$ is a partition on $X$, there is also the usual logical entropy
$h\left(  \pi\right)  =p_{X}\times p_{X}\left(  \operatorname*{dit}\left(
\pi\right)  \right)  =1-\operatorname*{tr}\left[  \rho\left(  \pi\right)
^{2}\right]  =h\left(  \pi\times\mathbf{0}_{Y}\right)  $ where
$\operatorname*{dit}\left(  \pi\right)  \subseteq X\times X$ and similarly for
$p_{Y}$.

Since the context should be clear, we may henceforth adopt the old notation
from the case where $\pi$ and $\sigma$ were partitions on the same set $U$,
i.e., $h\left(  \pi\right)  =h\left(  \pi\times\mathbf{0}_{Y}\right)  $,
$h\left(  \sigma\right)  =h\left(  \mathbf{0}_{X}\times\sigma\right)  $,
$h\left(  \pi,\sigma\right)  =h\left(  \pi\times\sigma\right)  $, etc.

Since the logical entropies are the values of a probability measure, all the
usual identities hold where the underlying set is now $\left(  X\times
Y\right)  ^{2}$ instead of $U^{2}$.%

\begin{center}
\includegraphics[
height=1.9796in,
width=2.335in
]%
{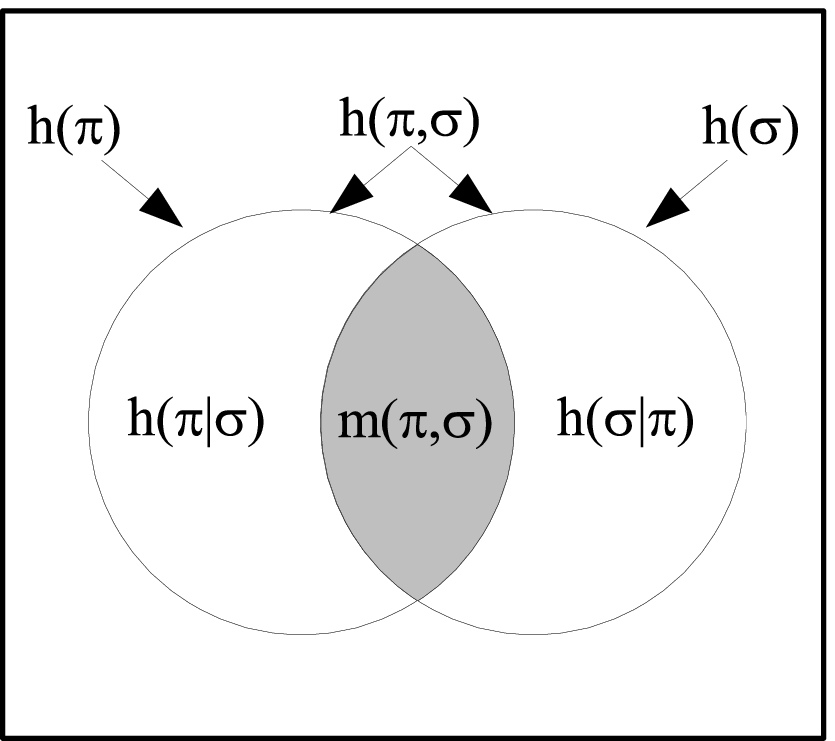}%
\end{center}

\begin{center}
Figure 4: Venn diagram for logical entropies

as values of a probability measure $p\times p$ on $\left(  X\times Y\right)
^{2}$.
\end{center}

The previous treatment of $h\left(  X\right)  $, $h\left(  Y\right)  $,
$h\left(  X,Y\right)  $, $h\left(  X|Y\right)  $, $h\left(  Y|X\right)  $, and
$m\left(  X,Y\right)  $ in \cite{ell:lit} was just the special cases where
$\pi=\mathbf{1}_{X}$ and $\sigma=\mathbf{1}_{Y}$.

\subsection{Quantum logical entropies with non-commuting observables}

As before in the case of commuting observables, the quantum case can be
developed in close analogy with the previous classical case. Given a
finite-dimensional Hilbert space $V$ and not necessarily commuting observables
$F,G:V\rightarrow V$, let $X$ be an orthonormal basis of $V$ of $F$%
-eigenvectors and let $Y$ be an orthonormal basis for $V$ of $G$-eigenvectors
(so $\left\vert X\right\vert =|Y|$).

Let $f:X\rightarrow%
\mathbb{R}
$ be the eigenvalue function for $F$ with values $\left\{  \phi_{i}\right\}
_{i=1}^{I}$, and let $g:Y\rightarrow%
\mathbb{R}
$ be the eigenvalue function for $G$ with values $\left\{  \gamma_{j}\right\}
_{j=1}^{J}$.

Each eigenvalue function induces a partition on its domain:

\begin{center}
$\pi=\left\{  f^{-1}\left(  \phi_{i}\right)  \right\}  =\left\{
B_{1},...,B_{I}\right\}  $ on $X$, and $\sigma=\left\{  g^{-1}\left(
\gamma_{j}\right)  \right\}  =\left\{  C_{1},...,C_{J}\right\}  $ on $Y$.
\end{center}

We associated with ordered pair $\left(  x,y\right)  $, the basis element
$x\otimes y$ in the basis $\left\{  x\otimes y\right\}  _{x\in X,y\in Y}$ for
$V\otimes V$. Then each pair of pairs $\left(  \left(  x,y\right)  ,\left(
x^{\prime},y^{\prime}\right)  \right)  $ is associated with the basis element
$\left(  x\otimes y\right)  \otimes\left(  x^{\prime}\otimes y^{\prime
}\right)  $ in $\left(  V\otimes V\right)  \otimes\left(  V\otimes V\right)
=\left(  V\otimes V\right)  ^{2}$.

Instead of ditsets or information sets, we now have qudit subspaces or
information subspaces. For $R\subseteq\left(  V\otimes V\right)  ^{2}$, let
$\left[  R\right]  $ be the subspace generated by $R$. We simplify notation of
$qudit\left(  \pi\times\mathbf{0}_{Y}\right)  =qudit\left(  \pi\right)
=\left\{  \left(  x\otimes y\right)  \otimes\left(  x^{\prime}\otimes
y^{\prime}\right)  :f\left(  x\right)  \neq f\left(  x^{\prime}\right)
\right\}  $, etc.

\begin{itemize}
\item $\left[  qudit\left(  \pi\right)  \right]  =\left[  \left\{  \left(
x\otimes y\right)  \otimes\left(  x^{\prime}\otimes y^{\prime}\right)
:f\left(  x\right)  \neq f\left(  x^{\prime}\right)  \right\}  \right]  $;

\item $\left[  qudit\left(  \sigma\right)  \right]  =\left[  \left\{  \left(
x\otimes y\right)  \otimes\left(  x^{\prime}\otimes y^{\prime}\right)
:g\left(  y\right)  \neq g\left(  y^{\prime}\right)  \right\}  \right]  $;

\item $\left[  qudit\left(  \pi,\sigma\right)  \right]  =\left[  qudit\left(
\pi\right)  \cup qudit\left(  \sigma\right)  \right]  $, and so
forth.\footnote{It is again an easy implication of the aforementioned Common
Dits Theorem that any two nonzero information spaces $\left[
\operatorname*{dit}\left(  \pi\right)  \right]  $ and $\left[
\operatorname*{dit}\left(  \sigma\right)  \right]  $ have a nonzero
intersection so the mutual information space $\left[  \operatorname*{dit}%
\left(  \pi\right)  \cap\operatorname*{dit}\left(  \sigma\right)  \right]  $
is not the zero space.}
\end{itemize}

A normalized state $\left\vert \psi\right\rangle $ on $V\otimes V$ defines a
pure state density matrix $\rho\left(  \psi\right)  =\left\vert \psi
\right\rangle \left\langle \psi\right\vert $. Let $\alpha_{x,y}=\left\langle
x\otimes y|\psi\right\rangle $ so if $P_{\left[  x\otimes y\right]  }$ is the
projection to the subspace (ray) generated by $x\otimes y$ in $V\otimes V$,
then a probability distribution on $X\times Y$ is defined by:

\begin{center}
$p\left(  x,y\right)  =\alpha_{x,y}\alpha_{x,y}^{\ast}=\operatorname*{tr}%
\left[  P_{\left[  x\otimes y\right]  }\rho\left(  \psi\right)  \right]  $,
\end{center}

\noindent or more generally, for a subspace $T\subseteq V\otimes V$, a
probability distribution is defined on the subspaces by:

\begin{center}
$\Pr\left(  T\right)  =\operatorname*{tr}\left[  P_{T}\rho\left(  \psi\right)
\right]  $.
\end{center}

Then the product probability distribution $p\times p$ on the subspaces of
$\left(  V\otimes V\right)  ^{2}$ defines the quantum logical entropies when
applied to the information subspaces:

\begin{itemize}
\item $h\left(  F:\psi\right)  =p\times p\left(  \left[  qudit\left(
\pi\right)  \right]  \right)  =\operatorname*{tr}\left[  P_{\left[
qudit\left(  \pi\right)  \right]  }\left(  \rho\left(  \psi\right)
\otimes\rho\left(  \psi\right)  \right)  \right]  $;

\item $h\left(  G:\psi\right)  =p\times p\left(  \left[  qudit\left(
\sigma\right)  \right]  \right)  =\operatorname*{tr}\left[  P_{\left[
qudit\left(  \sigma\right)  \right]  }\left(  \rho\left(  \psi\right)
\otimes\rho\left(  \psi\right)  \right)  \right]  $;

\item $h\left(  F,G:\psi\right)  =p\times p\left(  \left[  qudit\left(
\pi\right)  \cup qudit\left(  \sigma\right)  \right]  \right)
=\operatorname*{tr}\left[  P_{\left[  qudit\left(  \pi\right)  \cup
qudit\left(  \sigma\right)  \right]  }\left(  \rho\left(  \psi\right)
\otimes\rho\left(  \psi\right)  \right)  \right]  $;

\item $h\left(  F|G:\psi\right)  =p\times p\left(  \left[  qudit\left(
\pi\right)  -qudit\left(  \sigma\right)  \right]  \right)  =\operatorname*{tr}%
\left[  P_{\left[  qudit\left(  \pi\right)  -qudit\left(  \sigma\right)
\right]  }\left(  \rho\left(  \psi\right)  \otimes\rho\left(  \psi\right)
\right)  \right]  $;

\item $h\left(  G|F:\psi\right)  =p\times p\left(  \left[  qudit\left(
\sigma\right)  -qudit\left(  \pi\right)  \right]  \right)  =\operatorname*{tr}%
\left[  P_{\left[  qudit\left(  \sigma\right)  -qudit\left(  \pi\right)
\right]  }\left(  \rho\left(  \psi\right)  \otimes\rho\left(  \psi\right)
\right)  \right]  $;

\item $m\left(  F,G:\psi\right)  =p\times p\left(  \left[  qudit\left(
\pi\right)  \cap qudit\left(  \sigma\right)  \right]  \right)
=\operatorname*{tr}\left[  P_{\left[  qudit\left(  \pi\right)  \cap
qudit\left(  \sigma\right)  \right]  }\left(  \rho\left(  \psi\right)
\otimes\rho\left(  \psi\right)  \right)  \right]  $.
\end{itemize}

The observable $F:V\rightarrow V$ defines an observable $F\otimes I:V\otimes
V\rightarrow V\otimes V$ with the eigenvectors $x\otimes v$ for any nonzero
$v\in V$ and with the same eigenvalues $\phi_{1},...,\phi_{I}$.\footnote{The
context should suffice to distinguish the identity operator $I:V\rightarrow V$
from the index set $I$ for the $F$-eigenvalues.} Then in two independent
measurements of $\psi$ by the observable $F\otimes I$, we have:

\begin{center}
$h\left(  F:\psi\right)  $ = probability of getting distinct eigenvalues
$\phi_{i}$ and $\phi_{i^{\prime}}$, i.e., of getting a qudit of $F$.
\end{center}

In a similar manner, $G:V\rightarrow V$ defines the observable $I\otimes
G:V\otimes V\rightarrow V\otimes V$ with the eigenvectors $v\otimes y$ and
with the same eigenvalues $\gamma_{1},...,\gamma_{J}$. Then in two independent
measurements of $\psi$ by the observable $I\otimes G$, we have:

\begin{center}
$h\left(  G:\psi\right)  $ = probability of getting distinct eigenvalues
$\gamma_{j}$ and $\gamma_{j^{\prime}}$.
\end{center}

The two observables $F,G:V\rightarrow V$ define an observable $F\otimes
G:V\otimes V\rightarrow V\otimes V$ with the eigenvectors $x\otimes y$ for
$\left(  x,y\right)  \in X\times Y$ and eigenvalues $f\left(  x\right)
g\left(  y\right)  =\phi_{i}\gamma_{j}$. To cleanly interpret the compound
logical entropies, we assume there is no accidental degeneracy so there are no
$\phi_{i}\gamma_{j}=\phi_{i^{\prime}}\gamma_{j^{\prime}}$ for $i\neq
i^{\prime}$ and $j\neq j^{\prime}$. Then for two independent measurements of
$\psi$ by $F\otimes G$, the compound quantum logical entropies can be
interpreted as the following \textquotedblleft
two-measurement\textquotedblright\ probabilities:

\begin{itemize}
\item $h\left(  F,G:\psi\right)  $ = probability of getting distinct
eigenvalues $\phi_{i}\gamma_{j}\neq\phi_{i^{\prime}}\gamma_{j^{\prime}}$ where
$i\neq i^{\prime}$ or $j\neq j^{\prime}$;

\item $h\left(  F|G:\psi\right)  $ = probability of getting distinct
eigenvalues $\phi_{i}\gamma_{j}\neq\phi_{i^{\prime}}\gamma_{j}$ where $i\neq
i^{\prime}$;

\item $h\left(  G|F:\psi\right)  $ = probability of getting distinct
eigenvalues $\phi_{i}\gamma_{j}\neq\phi_{i}\gamma_{j^{\prime}}$ where $j\neq
j^{\prime}$;

\item $m\left(  F,G:\psi\right)  $ = probability of getting distinct
eigenvalues $\phi_{i}\gamma_{j}\neq\phi_{i^{\prime}}\gamma_{j^{\prime}}$ where
$i\neq i^{\prime}$ and $j\neq j^{\prime}$.
\end{itemize}

All the quantum logical entropies have been defined by the general method
using the information subspaces, but in the first three cases $h\left(
F:\psi\right)  $, $h\left(  G:\psi\right)  $, and $h\left(  F,G:\psi\right)
$, the density matrix method of defining logical entropies could also be used.
Then the fundamental theorem could be applied relating the quantum logical
entropies to the zeroed entities in the density matrices indicating the
eigenstates distinguished by the measurements.

The previous set identities for disjoint unions now become subspace identities
for direct sums such as:

\begin{center}
$\left[  qudit\left(  \pi\right)  \cup qudit\left(  \sigma\right)  \right]
=\left[  qudit\left(  \pi\right)  -qudit\left(  \sigma\right)  \right]
\oplus\left[  qudit\left(  \pi\right)  \cap qudit\left(  \sigma\right)
\right]  \oplus\left[  qudit\left(  \sigma\right)  -qudit\left(  \pi\right)
\right]  $.
\end{center}

Hence the probabilities are additive on those subspaces:

\begin{center}
$h\left(  F,G:\psi\right)  =h\left(  F|G:\psi\right)  +m\left(  F,G:\psi
\right)  +h\left(  G|F:\psi\right)  $.%

\begin{center}
\includegraphics[
height=1.9744in,
width=2.2736in
]%
{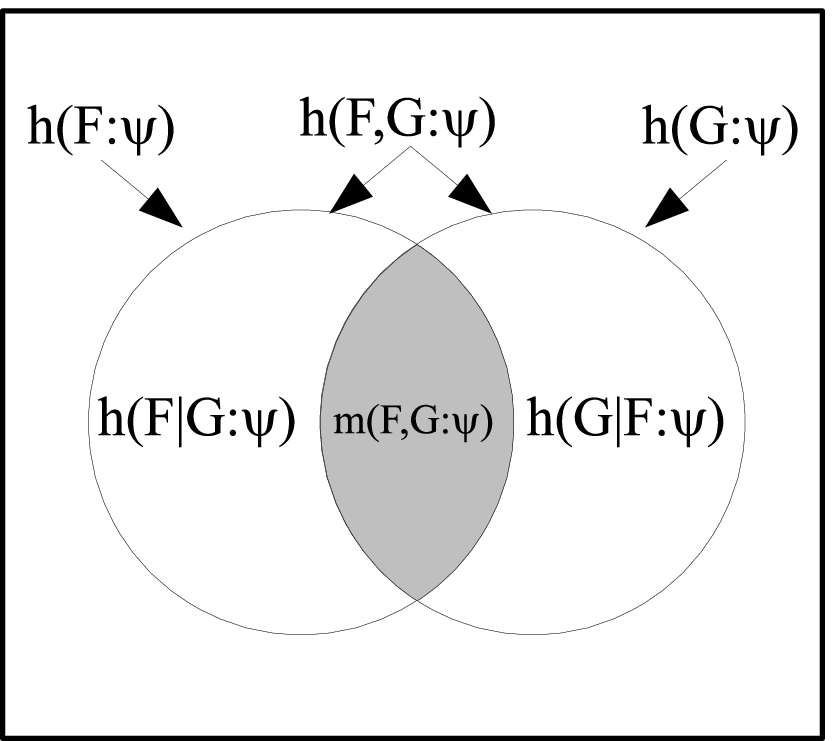}%
\end{center}

Figure 5: Venn diagram for quantum logical entropies

as values of a probability measure on $\left(  V\otimes V\right)  ^{2}$.
\end{center}

\subsection{Quantum logical entropies of density matrices in general}

The extension of the classical logical entropy $h\left(  p\right)
=1-\sum_{i=1}^{n}p_{i}^{2}$ of a probability distribution $p=\left(
p_{1},...,p_{n}\right)  $ to the quantum case is $h\left(  \rho\right)
=1-\operatorname*{tr}\left[  \rho^{2}\right]  $ where a density matrix $\rho$
replaces the probability distribution $p$ and the trace replaces the
summation. In the previous section, quantum logical entropies were defined in
terms of given observables $F,G:V\rightarrow V$ (as self-adjoint operators) as
well as a state $\psi$ and its density matrix $\rho\left(  \psi\right)  $. An
arbitrary density operator $\rho$, representing a pure or mixed state on $V$,
is also a self-adjoint operator on $V$ so quantum logical entropies can be
defined where density operators play the double role of providing the
measurement basis (as self-adjoint operators) as well as the state being measured.

Let $\rho$ and $\tau$ be two non-commuting density operators on $V$. Let
$X=\left\{  u_{i}\right\}  _{i=1,...,n}$ be an orthonormal (ON) basis of
$\rho$ eigenvectors and let $\left\{  \lambda_{i}\right\}  _{i=1,...,n}$ be
the corresponding eigenvalues which must be non-negative and sum to $1$ so
they can be interpreted as probabilities. Let $Y=\left\{  v_{j}\right\}
_{j=1,...,n}$ be an ON basis of eigenvectors for $\tau$ and let $\left\{
\mu_{j}\right\}  _{j=1,...,n}$ be the corresponding eigenvalues which are also
non-negative and sum to $1$.

Each density operator plays a double role. For instance, $\rho$ acts as the
observable to supply the measurement basis of $\left\{  u_{i}\right\}  _{i}$
and the eigenvalues $\left\{  \lambda_{i}\right\}  _{i}$ as well as being the
state to be measured supplying the probabilities $\left\{  \lambda
_{i}\right\}  _{i}$ for the measurement outcomes. Hence we could define
quantum logical entropies as in the previous section. That analysis would
analyze the distinctions between probabilities $\lambda_{i}\neq\lambda
_{i^{\prime}}$ since they are the eigenvalues too. But that analysis would not
give the quantum analogue of $h\left(  p\right)  =1-\sum_{i}p_{i}^{2}$ which
in effect uses the discrete partition on the index set $\left\{
1,...,n\right\}  $ and pays no attention to when the probabilities of
different indices are the same or different. Hence we will now develop the
analysis as in the last section but by using the discrete partition
$\mathbf{1}_{X}$ on the set of `index' states $X=\left\{  u_{i}\right\}  _{i}$
and similarly for the discrete partition $\mathbf{1}_{Y}$ on $Y=\left\{
v_{j}\right\}  _{j}$, the ON basis of eigenvectors for $\tau$.

The qudit sets of $\left(  V\otimes V\right)  \otimes\left(  V\otimes
V\right)  $ are then defined according to the identity and difference on the
index sets and independent of the eigenvalue-probabilities, e.g.,
$qudit\left(  \mathbf{1}_{X}\right)  =\left\{  \left(  u_{i}\otimes
v_{j}\right)  \otimes\left(  u_{i^{\prime}}\otimes v_{j^{\prime}}\right)
:i\neq i^{\prime}\right\}  $. Then the qudit subspaces are the subspaces of
$\left(  V\otimes V\right)  ^{2}$ generated by the qudit sets of generators:

\begin{itemize}
\item $\left[  qudit\left(  \mathbf{1}_{X}\right)  \right]  =\left[  \left(
u_{i}\otimes v_{j}\right)  \otimes\left(  u_{i^{\prime}}\otimes v_{j^{\prime}%
}\right)  :i\neq i^{\prime}\right]  $;

\item $\left[  qudit\left(  \mathbf{1}_{Y}\right)  \right]  =\left[  \left(
u_{i}\otimes v_{j}\right)  \otimes\left(  u_{i^{\prime}}\otimes v_{j^{\prime}%
}\right)  :j\neq j^{\prime}\right]  $;

\item $\left[  qudit\left(  \mathbf{1}_{X},\mathbf{1}_{Y}\right)  \right]
=\left[  qudit\left(  \mathbf{1}_{X}\right)  \cup qudit\left(  \mathbf{1}%
_{Y}\right)  \right]  =\left[  \left(  u_{i}\otimes v_{j}\right)
\otimes\left(  u_{i^{\prime}}\otimes v_{j^{\prime}}\right)  :i\neq i^{\prime
}\text{ or }j\neq j^{\prime}\right]  $;

\item $\left[  qudit\left(  \mathbf{1}_{X}|\mathbf{1}_{Y}\right)  \right]
=\left[  qudit\left(  \mathbf{1}_{X}\right)  -qudit\left(  \mathbf{1}%
_{Y}\right)  \right]  =\left[  \left(  u_{i}\otimes v_{j}\right)
\otimes\left(  u_{i^{\prime}}\otimes v_{j^{\prime}}\right)  :i\neq i^{\prime
}\text{ and }j=j^{\prime}\right]  $;

\item $\left[  qudit\left(  \mathbf{1}_{Y}|\mathbf{1}_{X}\right)  \right]
=\left[  qudit\left(  \mathbf{1}_{Y}\right)  -qudit\left(  \mathbf{1}%
_{X}\right)  \right]  =\left[  \left(  u_{i}\otimes v_{j}\right)
\otimes\left(  u_{i^{\prime}}\otimes v_{j^{\prime}}\right)  :i=i^{\prime
}\text{ and }j\neq j^{\prime}\right]  $; and

\item $\left[  qudit\left(  \mathbf{1}_{Y}\&\mathbf{1}_{X}\right)  \right]
=\left[  qudit\left(  \mathbf{1}_{Y}\right)  \cap qudit\left(  \mathbf{1}%
_{X}\right)  \right]  =\left[  \left(  u_{i}\otimes v_{j}\right)
\otimes\left(  u_{i^{\prime}}\otimes v_{j^{\prime}}\right)  :i\neq i^{\prime
}\text{ and }j\neq j^{\prime}\right]  $.
\end{itemize}

\noindent Then as qudit sets: $qudit\left(  \mathbf{1}_{X},\mathbf{1}%
_{Y}\right)  =qudit\left(  \mathbf{1}_{X}|\mathbf{1}_{Y}\right)  \uplus
qudit\left(  \mathbf{1}_{Y}|\mathbf{1}_{X}\right)  \uplus qudit\left(
\mathbf{1}_{Y}\&\mathbf{1}_{X}\right)  $, and the corresponding qudit
subspaces stand in the same relation where the disjoint union is replaced by
the disjoint sum.

The density operator $\rho$ is represented by the diagonal density matrix
$\rho_{X}$ in its own ON basis $X$ with $\left(  \rho_{X}\right)
_{ii}=\lambda_{i}$ and similarly for the diagonal density matrix $\tau_{Y}$
with $\left(  \tau_{Y}\right)  _{jj}=\mu_{j}$. The density operators
$\rho,\tau$ on $V$ define a density operator $\rho\otimes\tau$ on $V\otimes V$
with the ON basis of eigenvectors $\left\{  u_{i}\otimes v_{j}\right\}
_{i,j}$ and the eigenvalue-probabilities of $\left\{  \lambda_{i}\mu
_{j}\right\}  _{i,j}$. The operator $\rho\otimes\tau$ is represented in its ON
basis by the diagonal density matrix $\rho_{X}\otimes\tau_{Y}$ with diagonal
entries $\lambda_{i}\mu_{j}$ where $1=\left(  \lambda_{1}+...+\lambda
_{n}\right)  \left(  \mu_{1}+...+\mu_{n}\right)  =\sum_{i,j=1}^{n}\lambda
_{i}\mu_{j}$. The probability measure $p\left(  u_{i}\otimes v_{j}\right)
=\lambda_{i}\mu_{j}$ on $V\otimes V$ defines the product measure $p\times p$
on $\left(  V\otimes V\right)  ^{2}$ where it can be applied to the qudit
subspaces to define the quantum logical entropies as usual.

In the first instance, we have:

\begin{center}
$h\left(  \mathbf{1}_{X}:\rho\otimes\tau\right)  =p\times p\left(  \left[
qudit\left(  \mathbf{1}_{X}\right)  \right]  \right)  =\sum\left\{
\lambda_{i}\mu_{j}\lambda_{i^{\prime}}\mu_{j^{\prime}}:i\neq i^{\prime
}\right\}  $

$=\sum_{i\neq i^{\prime}}\lambda_{i}\lambda_{i^{\prime}}\sum_{j,j^{\prime}}%
\mu_{j}\mu_{j^{\prime}}=\sum_{i\neq i^{\prime}}\lambda_{i}\lambda_{i^{\prime}%
}=1-\sum_{i}\lambda_{i}^{2}=1-\operatorname*{tr}\left[  \rho^{2}\right]
=h\left(  \rho\right)  $
\end{center}

\noindent and similarly $h\left(  \mathbf{1}_{Y}:\rho\otimes\tau\right)
=h\left(  \tau\right)  $. Since all the data is supplied by the two density
operators, we can use simplified notation to define the corresponding joint,
conditional, and mutual entropies:

\begin{itemize}
\item $h\left(  \rho,\tau\right)  =h\left(  \mathbf{1}_{X},\mathbf{1}_{Y}%
:\rho\otimes\tau\right)  =p\times p\left(  \left[  qudit\left(  \mathbf{1}%
_{X}\right)  \cup qudit\left(  \mathbf{1}_{Y}\right)  \right]  \right)  $;

\item $h\left(  \rho|\tau\right)  =h\left(  \mathbf{1}_{X}|\mathbf{1}_{Y}%
:\rho\otimes\tau\right)  =p\times p\left(  \left[  qudit\left(  \mathbf{1}%
_{X}\right)  -qudit\left(  \mathbf{1}_{Y}\right)  \right]  \right)  $;

\item $h\left(  \tau|\rho\right)  =h\left(  \mathbf{1}_{Y}|\mathbf{1}_{X}%
:\rho\otimes\tau\right)  =p\times p\left(  \left[  qudit\left(  \mathbf{1}%
_{Y}\right)  -qudit\left(  \mathbf{1}_{X}\right)  \right]  \right)  $; and

\item $m\left(  \rho,\tau\right)  =h\left(  \mathbf{1}_{Y}\&\mathbf{1}%
_{X}:\rho\otimes\tau\right)  =p\times p\left(  \left[  qudit\left(
\mathbf{1}_{Y}\right)  \cap qudit\left(  \mathbf{1}_{X}\right)  \right]
\right)  $.
\end{itemize}

\noindent Then the usual Venn diagram relationships hold for the probability
measure $p\times p$ on $\left(  V\otimes V\right)  ^{2}$, e.g.,

\begin{center}
$h\left(  \rho,\tau\right)  =h\left(  \rho|\tau\right)  +h\left(  \tau
|\rho\right)  +m\left(  \rho,\tau\right)  $,
\end{center}

\noindent and probability interpretations are readily available. There are two
probability distributions $\lambda=\left\{  \lambda_{i}\right\}  _{i}$ and
$\mu=\left\{  \mu_{j}\right\}  _{j}$ on the sample space $\left\{
1,...,n\right\}  $. Two pairs $\left(  i,j\right)  $ and $\left(  i^{\prime
},j^{\prime}\right)  $ are drawn with replacement, the first entry in each
pair is drawn according to $\lambda$ and the second according to $\mu$. Then
$h\left(  \rho,\tau\right)  $ is the probability that $i\neq i^{\prime}$ or
$j\neq j^{\prime}$ (or both); $h\left(  \rho|\tau\right)  $ is the probability
that $i\neq i^{\prime}$ and $j=j^{\prime}$; and so forth. Note that this
interpretation assumes no special significance to a $\lambda_{i}$ and $\mu
_{i}$ having the same index since we are drawing a pair of pairs.

In the classical case of two probability distributions $p=\left(
p_{1},...,p_{n}\right)  $ and $q=\left(  q_{1},...,q_{n}\right)  $ on the same
index set, the \textit{logical cross-entropy} is defined as: $h\left(
p||q\right)  =1-\sum_{i}p_{i}q_{i}$, and interpreted as the probability of
getting different indices in drawing a single pair, one from $p$ and the other
from $q$. However, this cross-entropy assumes some special significance to
$p_{i}$ and $q_{i}$ having the same index. But in our current quantum setting,
there is no correlation between the two sets of `index' states $\left\{
u_{i}\right\}  _{i=1,...,n}$ and $\left\{  v_{j}\right\}  _{j=1,...,n}$. But
when the two density operators commute, $\tau\rho=\rho\tau$, then we can take
$\left\{  u_{i}\right\}  _{i=1,...,n}$ as an ON basis of simultaneous
eigenvectors for the two operators with respective eigenvalues $\lambda_{i}$
and $\mu_{i}$ for $u_{i}$ with $i=1,...,n$. In that special case, we can
meaningfully define the \textit{quantum logical cross-entropy} as $h\left(
\rho||\tau\right)  =1-\sum_{i=1}^{n}\lambda_{i}\mu_{i}$, but the general case
awaits further analysis below.

\section{The logical Hamming distance between two partitions}

The development of logical quantum information theory in terms of some given
commuting or non-commuting observables gives an analysis of the
distinguishability of quantum states using those observables. Without any
given observables, there is still a natural logical analysis of the distance
between quantum states that generalizes the `classical' logical distance
between partitions on a set. In the classical case, we have the logical
entropy $h\left(  \pi\right)  $ of a partition where the partition plays the
role of the direct-sum decomposition of eigenspaces of an observable in the
quantum case. But we also have just the logical entropy $h\left(  p\right)  $
of a probability distribution $p=\left(  p_{1},...,p_{n}\right)  $ and the
related compound notions of logical entropy given another probability
distribution $q=\left(  q_{1},...,q_{n}\right)  $ indexed by the same set.
This section deals with the quantum version of that situation where we are
just given density matrices such as $\rho$ and $\tau$.

First we review that classical treatment to motivate the quantum version of
the logical distance between states. A binary relation $R\subseteq U\times U$
on $U=\left\{  u_{1},...,u_{n}\right\}  $ can be represented by an $n\times n$
\textit{incidence matrix} $I(R)$ where

\begin{center}
$I\left(  R\right)  _{ij}=\left\{
\begin{array}
[c]{c}%
1\text{ if }\left(  u_{i},u_{j}\right)  \in R\\
0\text{ if }\left(  u_{i},u_{j}\right)  \notin R\text{.}%
\end{array}
\right.  $
\end{center}

\noindent Taking $R$ as the equivalence relation $\operatorname*{indit}\left(
\pi\right)  $ associated with a partition $\pi=\left\{  B_{1},...,B_{I}%
\right\}  $, the\textit{ density matrix }$\rho\left(  \pi\right)  $\textit{ of
the partition }$\pi$ (with equiprobable points) is just the incidence matrix
$I\left(  \operatorname*{indit}\left(  \pi\right)  \right)  $ rescaled to be
of trace $1$ (i.e., sum of diagonal entries is $1$):

\begin{center}
$\rho\left(  \pi\right)  =\frac{1}{\left\vert U\right\vert }I\left(
\operatorname*{indit}\left(  \pi\right)  \right)  $.
\end{center}

From coding theory \cite[p. 66]{mceliece:info}, we have the notion of the
\textit{Hamming distance between two }$0,1$\textit{ vectors or matrices }(of
the same dimensions) which is the number of places where they differ. The
powerset $\wp\left(  U\times U\right)  $ can be viewed as a vector space over
$%
\mathbb{Z}
_{2}$ where the sum of two binary relations $R,R^{\prime}\subseteq U\times U$,
symbolized $R\Delta R^{\prime}=\left(  R-R^{\prime}\right)  \cup\left(
R^{\prime}-R\right)  =R\cup R^{\prime}-R\cap R^{\prime}$, is the set of
elements (i.e., ordered pairs $\left(  u_{i},u_{j}\right)  \in U\times U$)
that are in one set or the other but not both. Thus the Hamming distance
$D_{H}\left(  I\left(  R\right)  ,I\left(  R^{\prime}\right)  \right)  $
between the incidence matrices of two binary relations is just the cardinality
of their symmetric difference: $D_{H}\left(  I\left(  R\right)  ,I\left(
R^{\prime}\right)  \right)  =\left\vert R\Delta R^{\prime}\right\vert $.
Moreover, the size of the symmetric difference does not change if the binary
relations are replaced by their complements: $\left\vert R\Delta R^{\prime
}\right\vert =\left\vert \left(  U^{2}-R\right)  \Delta\left(  U^{2}%
-R^{\prime}\right)  \right\vert $.

Hence given two partitions $\pi=\left\{  B_{1},...,B_{I}\right\}  $ and
$\sigma=\left\{  C_{1},...,C_{J}\right\}  $ on $U$, the unnormalized Hamming
distance between the two partitions is naturally defined as:\footnote{This is
investigated in Rossi \cite{rossi:part-dist}.}

\begin{center}
$D\left(  \pi,\sigma\right)  =D_{H}\left(  I\left(  \operatorname*{indit}%
\left(  \pi\right)  \right)  ,I\left(  \operatorname*{indit}\left(
\sigma\right)  \right)  \right)  =\left\vert \operatorname*{indit}\left(
\pi\right)  \Delta\operatorname*{indit}\left(  \sigma\right)  \right\vert
=\left\vert \operatorname*{dit}\left(  \pi\right)  \Delta\operatorname*{dit}%
\left(  \sigma\right)  \right\vert $,
\end{center}

\noindent and the \textit{Hamming distance between }$\pi$ and $\sigma$ is
defined as the normalized $D\left(  \pi,\sigma\right)  $:

\begin{center}
$d\left(  \pi,\sigma\right)  =\frac{D\left(  \pi,\sigma\right)  }{\left\vert
U\times U\right\vert }=\frac{\left\vert \operatorname*{dit}\left(  \pi\right)
\Delta\operatorname*{dit}\left(  \sigma\right)  \right\vert }{\left\vert
U\times U\right\vert }=\frac{\left\vert \operatorname*{dit}\left(  \pi\right)
-\operatorname*{dit}\left(  \sigma\right)  \right\vert }{\left\vert U\times
U\right\vert }+\frac{\left\vert \operatorname*{dit}\left(  \sigma\right)
-\operatorname*{dit}\left(  \pi\right)  \right\vert }{\left\vert U\times
U\right\vert }=h\left(  \pi|\sigma\right)  +h\left(  \sigma|\pi\right)  $.
\end{center}

\noindent This motivates the general case of point probabilities $p=\left(
p_{1},...,p_{n}\right)  $ where we define the \textit{Hamming distance}
between the two partitions as the sum of the two logical conditional entropies:

\begin{center}
$d\left(  \pi,\sigma\right)  =h\left(  \pi|\sigma\right)  +h\left(  \sigma
|\pi\right)  =2h(\pi\vee\sigma)-h\left(  \pi\right)  -h\left(  \sigma\right)
$ .
\end{center}

To motivate the bridge to the quantum version of the Hamming distance, we need
to calculate it using the density matrices $\rho\left(  \pi\right)  $ and
$\rho\left(  \sigma\right)  $ of the two partitions. To compute the trace
$\operatorname*{tr}\left[  \rho\left(  \pi\right)  \rho\left(  \sigma\right)
\right]  $, we compute the diagonal elements in the product $\rho\left(
\pi\right)  \rho\left(  \sigma\right)  $ and add them up: $\left[  \rho\left(
\pi\right)  \rho\left(  \sigma\right)  \right]  _{kk}=\sum_{l}\rho\left(
\pi\right)  _{kl}\rho\left(  \sigma\right)  _{lk}=\sum_{l}\sqrt{p_{k}p_{l}%
}\sqrt{p_{l}p_{k}}$where the only nonzero terms are where $u_{k},u_{l}\in
B\cap C$ for some $B\in\pi$ and $C\in\sigma$. Thus if $u_{k}\in B\cap C$, then
$\left[  \rho\left(  \pi\right)  \rho\left(  \sigma\right)  \right]
_{kk}=\sum_{u_{l}\in B\cap C}p_{k}p_{l}$. So the diagonal element for $u_{k}$
is the sum of the $p_{k}p_{l}$ for $u_{l}$ in the same intersection $B\cap C$
as $u_{k}$ so it is $p_{k}\Pr\left(  B\cap C\right)  $. Then when we sum over
the diagonal elements, then for all the $u_{k}\in B\cap C$ for any given
$B,C$, we just sum $\sum_{u_{k}\in B\cap C}p_{k}\Pr\left(  B\cap C\right)
=\Pr\left(  B\cap C\right)  ^{2}$ so that $\operatorname*{tr}\left[
\rho\left(  \pi\right)  \rho\left(  \sigma\right)  \right]  =\sum_{B\in
\pi,C\in\sigma}\Pr\left(  B\cap C\right)  ^{2}=1-h\left(  \pi\vee
\sigma\right)  $.

Hence if we define the \textit{logical cross-entropy} of $\pi$ and $\sigma$ as:

\begin{center}
$h(\pi||\sigma)=1-\operatorname*{tr}\left[  \rho\left(  \pi\right)
\rho\left(  \sigma\right)  \right]  $,
\end{center}

\noindent then for partitions on $U$ with the point probabilities $p=\left(
p_{1},...,p_{n}\right)  $, the logical cross-entropy $h\left(  \pi
||\sigma\right)  $ of two partitions is the same as the logical joint entropy
which is also the logical entropy of the join:

\begin{center}
$h\left(  \pi||\sigma\right)  =h\left(  \pi,\sigma\right)  =h\left(  \pi
\vee\sigma\right)  $.
\end{center}

\noindent Thus we can also express the logical Hamming distance between two
partitions entirely in terms of density matrices:

\begin{center}
$d\left(  \pi,\sigma\right)  =2h\left(  \pi||\sigma\right)  -h\left(
\pi\right)  -h\left(  \sigma\right)  =\operatorname*{tr}\left[  \rho\left(
\pi\right)  ^{2}\right]  +\operatorname*{tr}\left[  \rho\left(  \sigma\right)
^{2}\right]  -2\operatorname*{tr}\left[  \rho\left(  \pi\right)  \rho\left(
\sigma\right)  \right]  $.
\end{center}

\section{The quantum logical Hamming distance}

The quantum logical entropy $h\left(  \rho\right)  =1-\operatorname*{tr}%
\left[  \rho^{2}\right]  $ of a density matrix $\rho$ generalizes the
classical $h\left(  p\right)  =1-\sum_{i}p_{i}^{2}$ for a probability
distribution $p=\left(  p_{1},\ldots,p_{n}\right)  $. As a self-adjoint
operator, a density matrix has a spectral decomposition $\rho=\sum_{i=1}%
^{n}\lambda_{i}\left\vert u_{i}\right\rangle \left\langle u_{i}\right\vert $
where $\left\{  \left\vert u_{i}\right\rangle \right\}  _{i=1,...,n}$ is an
orthonormal basis for $V$ and where all the eigenvalues $\lambda_{i}$ are
real, non-negative, and $\sum_{i=1}^{n}\lambda_{i}=1$. Then $h\left(
\rho\right)  =1-\sum_{i}\lambda_{i}^{2}$ so $h\left(  \rho\right)  $ can be
interpreted as the probability of getting distinct indices $i\neq i^{\prime}$
in two independent measurements of the state $\rho$ with $\left\{  \left\vert
u_{i}\right\rangle \right\}  $ as the measurement basis. Classically, it is
the two-draw probability of getting distinct indices in two independent
samples of the probability distribution $\lambda=\left(  \lambda_{1}%
,\ldots,\lambda_{n}\right)  $, just as $h\left(  p\right)  $ is the
probability of getting distinct indices in two independent draws on $p$. For a
pure state $\rho$, there is one $\lambda_{i}=1$ with the others zero, and
$h\left(  \rho\right)  =0$ is the probability of getting distinct indices in
two independent draws on $\lambda=\left(  0,\ldots,0,1,0,\ldots,0\right)  $.

In the classical case of the logical entropies, we worked with the ditsets or
sets of distinctions of partitions. But everything could also be expressed in
terms of the complementary sets of indits or indistinctions of partitions
(ordered pairs of elements in the same block of the partition) since:
$\operatorname*{dit}\left(  \pi\right)  \uplus\operatorname*{indit}\left(
\pi\right)  =U\times U$. When we switch to the density matrix treatment of
`classical' partitions, then the focus shifts to the indistinctions. For a
partition $\pi=\left\{  B_{1},\ldots,B_{I}\right\}  $, the logical entropy is
the sum of the distinction probabilities: $h\left(  \pi\right)  =\sum_{\left(
u_{k},u_{l}\right)  \in\operatorname*{dit}\left(  \pi\right)  }p_{k}p_{l}$
which in terms of indistinctions is:

\begin{center}
$h\left(  \pi\right)  =1-\sum_{\left(  u_{k},u_{l}\right)  \in
\operatorname*{indit}\left(  \pi\right)  }p_{k}p_{l}=1-\sum_{i=1}^{I}%
\Pr\left(  B_{i}\right)  ^{2}$.
\end{center}

\noindent When expressed in the density matrix formulation, then
$\operatorname*{tr}\left[  \rho\left(  \pi\right)  ^{2}\right]  $ is the sum
of the indistinction probabilities:

\begin{center}
$\operatorname*{tr}\left[  \rho\left(  \pi\right)  ^{2}\right]  =\sum_{\left(
u_{k},u_{l}\right)  \in\operatorname*{indit}\left(  \pi\right)  }p_{k}%
p_{l}=\sum_{i=1}^{I}\Pr\left(  B_{i}\right)  ^{2}$.
\end{center}

\noindent The nonzero entries in $\rho\left(  \pi\right)  $ have the form
$\sqrt{p_{k}p_{l}}$ for $\left(  u_{k},u_{l}\right)  \in\operatorname*{indit}%
\left(  \pi\right)  $; their squares are the indistinction probabilities. That
provides the interpretive bridge to the quantum case.

The quantum analogue of an indistinction probability is the absolute square
$\left\vert \rho_{kl}\right\vert ^{2}$ of a nonzero entry $\rho_{kl}$ in a
density matrix $\rho$ and $\operatorname*{tr}\left[  \rho^{2}\right]
=\sum_{k,l}\left\vert \rho_{kl}\right\vert ^{2}$ is the sum of those
`indistinction' probabilities. The nonzero entries in the density matrix
$\rho$ might be called \textquotedblleft\textit{coherences}\textquotedblright%
\ so that $\rho_{kl}$ may be interpreted as the amplitudes for the states
$u_{k}$ and $u_{l}$ to cohere together in the state $\rho$ so
$\operatorname*{tr}\left[  \rho^{2}\right]  $ is the sum of the
\textit{coherence probabilities}--just as $\operatorname*{tr}\left[
\rho\left(  \pi\right)  ^{2}\right]  =\sum_{\left(  u_{k},u_{l}\right)
\in\operatorname*{indit}\left(  \pi\right)  }p_{k}p_{l}$ is the sum of the
indistinction probabilities. The quantum logical entropy $h\left(
\rho\right)  =1-\operatorname*{tr}\left[  \rho^{2}\right]  $ may then be
interpreted as the sum of the \textit{decoherence probabilities}--just as
$h\left(  \rho\left(  \pi\right)  \right)  =h\left(  \pi\right)
=1-\sum_{\left(  u_{k},u_{l}\right)  \in\operatorname*{indit}\left(
\pi\right)  }p_{k}p_{l}$ is the sum of the distinction probabilities.

The general quantum version of the joint entropy $h\left(  \pi,\sigma\right)
=h\left(  \pi\vee\sigma\right)  =h\left(  \pi||\sigma\right)  $ is the:

\begin{center}
$h\left(  \rho||\tau\right)  =1-\operatorname*{tr}\left[  \tau\rho\right]  $

\textit{quantum logical cross-entropy}.
\end{center}

To work out its interpretation, we again take ON eigenvector bases $\left\{
\left\vert u_{i}\right\rangle \right\}  _{i=1}^{n}$ for $\rho$ and $\left\{
\left\vert v_{j}\right\rangle \right\}  _{j=1}^{n}$ for $\tau$ with
$\lambda_{i}$ and $\mu_{j}$ as the respective eigenvalues, and compute the
operation of $\tau\rho:V\rightarrow V$. Now $\left\vert u_{i}\right\rangle
=\sum_{j}\left\langle v_{j}|u_{i}\right\rangle \left\vert v_{j}\right\rangle $
so $\rho\left\vert u_{i}\right\rangle =\lambda_{i}\left\vert u_{i}%
\right\rangle =\sum_{j}\lambda_{i}\left\langle v_{j}|u_{i}\right\rangle
\left\vert v_{j}\right\rangle $ and then $\tau\rho\left\vert u_{i}%
\right\rangle =\sum_{j}\lambda_{i}\mu_{j}\left\langle v_{j}|u_{i}\right\rangle
\left\vert v_{j}\right\rangle $. Thus $\tau\rho$ in the $\left\{
u_{i}\right\}  _{i}$ basis would have the diagonal entries $\left\langle
u_{i}|\tau\rho|u_{i}\right\rangle =\sum_{j}\lambda_{i}\mu_{j}\left\langle
v_{j}|u_{i}\right\rangle \left\langle u_{i}|v_{j}\right\rangle $ so the trace is:

\begin{center}
$\operatorname*{tr}\left[  \tau\rho\right]  =\sum_{i}\left\langle u_{i}%
|\tau\rho|u_{i}\right\rangle =\sum_{i,j}\lambda_{i}\mu_{j}\left\langle
v_{j}|u_{i}\right\rangle \left\langle u_{i}|v_{j}\right\rangle
=\operatorname*{tr}\left[  \rho\tau\right]  $
\end{center}

\noindent which is symmetrical. The other information we have is the $\sum
_{i}\lambda_{i}=1=\sum_{j}\mu_{j}$ and they are non-negative. The classical
logical cross-entropy of two probability distributions is $h\left(
p||q\right)  =1-\sum_{i}p_{i}q_{i}$ where two indices $i$ and $i^{\prime}$ are
either identical or totally distinct. But in the quantum case, the `index'
states $\left\vert u_{i}\right\rangle $ and $\left\vert v_{j}\right\rangle $
have an `overlap' measured by the inner product $\left\langle u_{i}%
|v_{j}\right\rangle $. The trace $\operatorname*{tr}\left[  \rho\tau\right]  $
is real since $\left\langle v_{j}|u_{i}\right\rangle =\left\langle u_{i}%
|v_{j}\right\rangle ^{\ast}$ and $\left\langle v_{j}|u_{i}\right\rangle
\left\langle u_{i}|v_{j}\right\rangle =\left\vert \left\langle u_{i}%
|v_{j}\right\rangle \right\vert ^{2}=\left\vert \left\langle v_{j}%
|u_{i}\right\rangle \right\vert ^{2}$ is the probability of getting
$\lambda_{i}$ when measuring $v_{j}$ in the $u_{i}$ basis and the probability
of getting $\mu_{j}$ when measuring $u_{i}$ in the $v_{j}$ basis. The twofold
nature of density matrices as states and as observables then allows
$\operatorname*{tr}\left[  \rho\tau\right]  $ to be interpreted as the average
value of the observable $\rho$ when measuring the state $\tau$ or vice-versa.

We may call $\left\langle v_{j}|u_{i}\right\rangle \left\langle u_{i}%
|v_{j}\right\rangle $ the \textit{proportion }or\textit{ extent of overlap}
for those two index states. Thus $\operatorname*{tr}\left[  \rho\tau\right]  $
is the sum of \textit{all} the probability combinations $\lambda_{i}\mu_{j}$
weighted by the overlaps $\left\langle v_{j}|u_{i}\right\rangle \left\langle
u_{i}|v_{j}\right\rangle $ for the index states $\left\vert u_{i}\right\rangle
$ and $\left\vert v_{j}\right\rangle $. The quantum logical cross-entropy can
be written in a number of ways:

\begin{center}
$h\left(  \rho||\tau\right)  =1-\operatorname*{tr}\left[  \rho\tau\right]
=1-\sum_{i,j}\lambda_{i}\mu_{j}\left\langle v_{j}|u_{i}\right\rangle
\left\langle u_{i}|v_{j}\right\rangle $

$=\operatorname*{tr}\left[  \tau\left(  I-\rho\right)  \right]  =\sum
_{i,j}\left(  1-\lambda_{i}\right)  \mu_{j}\left\langle v_{j}|u_{i}%
\right\rangle \left\langle u_{i}|v_{j}\right\rangle $

$=\operatorname*{tr}\left[  \rho\left(  I-\tau\right)  \right]  =\sum
_{i,j}\lambda_{i}\left(  1-\mu_{j}\right)  \left\langle v_{j}|u_{i}%
\right\rangle \left\langle u_{i}|v_{j}\right\rangle $.
\end{center}

\noindent Classically, the `index state' $\left\{  i\right\}  $ completely
overlaps with $\left\{  j\right\}  $ when $i=j$ and has no overlap with any
other $\left\{  i^{\prime}\right\}  $ from the indices $\left\{
1,\ldots,n\right\}  $ so the `overlaps' are, as it were, $\left\langle
j|i\right\rangle \left\langle i|j\right\rangle =\delta_{ij}$, the Kronecker
delta. Hence the classical analogue formulas are:

\begin{center}
$h\left(  p||q\right)  =1-\sum_{i,j}p_{i}q_{j}\delta_{ij}=\sum_{i,j}\left(
1-p_{i}\right)  q_{j}\delta_{ij}=\sum_{i,j}p_{i}\left(  1-q_{j}\right)
\delta_{ij}$.
\end{center}

The quantum logical cross-entropy $h\left(  \rho||\tau\right)  $ can be
interpreted by considering two measurements, one of $\rho$ with the $\left\{
\left\vert u_{i}\right\rangle \right\}  _{i}$ measurement basis and the other
of $\tau$ with the $\left\{  \left\vert v_{j}\right\rangle \right\}  _{j}$
measurement basis. If the outcome of the $\rho$ measurement was $u_{i}$ with
probability $\lambda_{i}$, then the outcome of the $\tau$ measurement is
different than $v_{j}$ with probability $1-\mu_{j}$ but that distinction
probability $\lambda_{i}\left(  1-\mu_{j}\right)  $ is only relevant to the
extent that $u_{i}$ and $v_{j}$ are the `same state' or overlap, and that
extent is $\left\langle v_{j}|u_{i}\right\rangle \left\langle u_{i}%
|v_{j}\right\rangle $. Hence the quantum logical cross-entropy is the sum of
those two-measurement distinction probabilities weighted by the extent that
the states overlap.\footnote{The interpretation of $h\left(  \rho\right)  $
and $h\left(  \tau||\rho\right)  $, as well as the later development of the
quantum logical conditional entropy $h\left(  \rho|\tau\right)  $ and the
quantum Hamming distance $d\left(  \rho,\tau\right)  $, are all based on using
the eigenvectors and eigenvalues of density matrices--which Michael Nielsen
and Issac Chuang seem to prematurely dismiss as having little or no
\textquotedblleft special significance.\textquotedblright\ \cite[p.
103]{nielsen-chuang:bible}}

When the two density matrices commute, $\rho\tau=\tau\rho$, then (as noted
above) we have the essentially classical situation of one set of index states
$\left\{  \left\vert u_{i}\right\rangle \right\}  _{i}$ which is an
orthonormal basis set of simultaneous eigenvectors for both $\rho$ and $\tau$
with the respective eigenvalues $\left\{  \lambda_{i}\right\}  _{i}$ and
$\left\{  \mu_{j}\right\}  _{j}$. Then $\left\langle u_{j}|u_{i}\right\rangle
\left\langle u_{i}|u_{j}\right\rangle =\delta_{ij}$ so $h\left(  \rho
||\tau\right)  =\sum_{i,j}\lambda_{i}\left(  1-\mu_{j}\right)  \delta_{ij}$ is
the probability of getting two distinct index states $u_{i}$ and $u_{j}$ for
$i\neq j$ in two independent measurements, one of $\rho$ and one of $\tau$ in
the same measurement basis of $\left\{  \left\vert u_{i}\right\rangle
\right\}  _{i}$. This interpretation includes the special case when $\tau
=\rho$ and $h\left(  \rho||\rho\right)  =h\left(  \rho\right)  $.

We saw that classically, the logical Hamming distance between two partitions
could be defined as:

\begin{center}
$d\left(  \pi,\sigma\right)  =2h\left(  \pi||\sigma\right)  -h\left(
\pi\right)  -h\left(  \sigma\right)  =\operatorname*{tr}\left[  \rho\left(
\pi\right)  ^{2}\right]  +\operatorname*{tr}\left[  \rho\left(  \sigma\right)
^{2}\right]  -2\operatorname*{tr}\left[  \rho\left(  \pi\right)  \rho\left(
\sigma\right)  \right]  $
\end{center}

\noindent so this motivates the quantum definition:\footnote{Nielsen and
Chuang suggest the idea of a Hamming distance between quantum states--only to
then dismiss it. \textquotedblleft Unfortunately, the Hamming distance between
two objects is simply a matter of labeling, and \textit{a priori} there aren't
any labels in the Hilbert space arena of quantum mechanics!\textquotedblright%
\ \cite[p. 399]{nielsen-chuang:bible} They are right that there is no
correlation, say, between the vectors in the two ON bases $\left\{
u_{i}\right\}  _{i}$ and $\left\{  v_{j}\right\}  _{j}$ for $V$, but the
cross-entropy $h\left(  \rho||\tau\right)  $ uses all possible combinations in
the terms $\lambda_{i}\left(  1-\mu_{j}\right)  \left\langle v_{j}%
|u_{i}\right\rangle \left\langle u_{i}|v_{j}\right\rangle $ and thus the
definition of the Hamming distance developed here does not use any arbitrary
labeling or correlations.}

\begin{center}
$d\left(  \rho,\tau\right)  =2h\left(  \rho||\tau\right)  -h\left(
\rho\right)  -h\left(  \tau\right)  =\operatorname*{tr}\left[  \rho
^{2}\right]  +\operatorname*{tr}\left[  \tau^{2}\right]  -2\operatorname*{tr}%
\left[  \rho\tau\right]  $

\textit{quantum logical Hamming distance between two quantum states}.
\end{center}

There is another distance measure between quantum states, namely the
Hilbert-Schmidt norm, that has been recently investigated in
\cite{tamir-cohen:hilbert-schmidt} (with an added $\frac{1}{2}$
factor):\footnote{It is the square of the Euclidean distance between the
quantum states and, ignoring the $\frac{1}{2}$ factor, it is the square of the
\textquotedblleft trace distance\textquotedblright\ \cite[Chapter
9]{nielsen-chuang:bible} between the states.}

\begin{center}
$\operatorname*{tr}\left[  \left(  \rho-\tau\right)  ^{2}\right]  $

Hilbert-Schmidt distance
\end{center}

\noindent where we write $A^{2}$ for $A^{\dagger}A$. Then the naturalness of
this distance is enhanced by the fact that it is the same as the quantum
Hamming distance:

\begin{center}
$\operatorname*{tr}\left[  \left(  \rho-\tau\right)  ^{2}\right]
=\operatorname*{tr}\left[  \rho^{2}\right]  +\operatorname*{tr}\left[
\tau^{2}\right]  -2\operatorname*{tr}\left[  \rho\tau\right]  =2h\left(
\rho||\tau\right)  -h\left(  \rho\right)  -h\left(  \tau\right)  =d\left(
\rho,\tau\right)  $

Hilbert-Schmidt distance = quantum logical Hamming distance between quantum states.
\end{center}

\noindent Hence the \textit{information inequality} holds trivially for the
quantum logical Hamming distance:

\begin{center}
$d\left(  \rho,\tau\right)  \geq0$ with equality iff $\rho=\tau$.
\end{center}

\section{Concluding Remarks}

Logical information theory arises as the quantitative version of the logic of
partitions just as logical probability theory arises as the quantitative
version of the dual Boolean logic of subsets. Philosophically, logical
information is based on the idea of information-as-distinctions. The Shannon
definitions of entropy arise naturally out of the logical definitions by
replacing the counting of distinctions by the counting of the minimum number
of binary partitions (bits) that are required, on average, to make all the
same distinctions, i.e., to uniquely encode the distinguished elements--which
is why the Shannon theory is so well-adapted for the theory of coding and communication.

This `classical' logical information theory may be developed with the data of
two partitions on a set with point probabilities. Section 7 gives the
generalization to the quantum case where the partitions are provided by two
commuting observables, the point set is an ON basis of simultaneous
eigenvectors, and the point probabilities are provided by the state to be
measured. In Section 8, the fundamental theorem for quantum logical entropy
and measurement established a direct quantitative connection between the
increase in quantum logical entropy due to a projective measurement and the
eigenstates (cohered together in the pure superposition state being measured)
that are distinguished by the measurement (decohered in the post-measurement
mixed state). This theorem establishes quantum logical entropy as a natural
notion for a quantum information theory focusing on distinguishing states.

The classical theory might also start with partitions on two different sets
and a probability distribution on the product of the sets. Section 9 gives the
quantum generalization of that case with the two sets being two ON bases for
two non-commuting observables, and the probabilities are provided by a state
to be measured. The classical theory may also be developed just using two
probability distributions indexed by the same set, and this is generalized to
the quantum case where we are just given two density matrices representing two
states in a Hilbert space. Sections 10 and 11 carry over the Hamming distance
measure from the classical to the quantum case where it is equal to the
Hilbert-Schmidt distance measure (square of the trace distance).

The overall argument is that quantum logical entropy is the simple and natural
notion of information-as-distinctions for quantum information theory focusing
on the distinguishing of quantum states.

\end{document}